\theoremstyle{plain}
\newtheorem{theorem}{Theorem}
\theoremstyle{definition}
\theoremstyle{remark}
\renewcommand{\i}{\mathrm{i}}
\newcommand{\e}{\mathrm{e}}
\newcommand{\cF}{\mathcal{F}}
\newcommand{\pder}{\partial}
\newcommand{\rI}{\mathrm{I}}
\renewcommand{\Im}{\mathrm{Im}\,}
\newcommand{\bbN}{\mathbb{N}}
\newcommand{\bbZ}{\mathbb{Z}}
\newcommand{\bbC}{\mathbb{C}}
\newcommand{\bbR}{\mathbb{R}}
\newcommand{\bF}{\mathbf{F}}
\newcommand{\bD}{\mathbf{D}}
\newcommand{\bU}{\mathbf{U}}
\begin{document}
\title{Equations of hypergeometric type\\ in the degenerate case}
\author{
  Jan Derezi\'{n}ski\footnote{The financial support of the National Science
Center, Poland, under the grant UMO-2014/15/B/ST1/00126, is gratefully
acknowledged.},
  \hskip 3ex
  Maciej Karczmarczyk\footnotemark[\value{footnote}]
\\
Department of Mathematical Methods in Physics, Faculty of Physics\\
University of Warsaw,  Pasteura 5, 02-093, Warszawa, Poland\\
email: jan.derezinski@fuw.edu.pl\\
email: maciej.karczmarczyk@fuw.edu.pl}

\maketitle

\abstract{We consider the three most important equations of hypergeometric type, 2F1, 1F1 and 1F0, in the so-called degenerate case.
In this case one of the  parameters, usually denoted $c$, is an integer and the standard basis of solutions consists of a hypergeometric-type function and a function with a logarithmic singularity. This article is devoted to a thorough analysis of the latter solution to all three equations.}

\section{Introduction}

The paper is devoted to three equations of hypergeometric type:
\begin{align}&\text{the {\em ${}_0F_1$ equation}}\notag&\\
&\big(z\partial_z^2+c\partial_z-1\big)f(z)=0,&\label{f0}\\[3ex]
&\text{the  {\em ${}_1F_1$} or the {\em confluent equation}}&\notag\\&\big(z\partial_z^2+(c-z)\partial_z-a\big)f(z)=0,&\label{f1}
\\[3ex]&\text{the  {\em ${}_2F_1$} or the {\em hypergeometric equation}}\notag&\\
&\big(z(1-z)\partial_z^2+\big(c-(a+b+1)z\big)\partial_z-ab\big)f(z)=0.&
\label{f2}\end{align}
They are probably the most important exactly solvable differential equations of mathematical physics. (Perhaps the ${}_0F_1$ equation is not so well known---by a simple transformation it is however equivalent  to the much better known Bessel equation).

The theory of these equations is quite different depending on whether the parameter $c$ is an integer or not. For integer $c$ there exist additional identities satisfied by solutions, and therefore it is more difficult to construct all solutions. One of them is analytic at $0$, but the remaining solutions have a logarithmic singularity. 
The case of integer $c$ will be called  {\em degenerate}. 

In our paper we would like to discuss systematically the equations (\ref{f0}), (\ref{f1}) and (\ref{f2}) in the degenerate case. In particular, we will introduce and analyze new special functions
$\bD(1+m;z)$,  $\bD(a;1+m;z)$, and  $\bD(a,b;1+m;z)$ useful in describing solutions of these equations in this case.

Let us remark that  the degenerate case of
(\ref{f0}), (\ref{f1}) and (\ref{f2}), even though it is in some sense  exceptional, often appears in applications. For instance, the Bessel equation with integer parameters corresponds to the degenerate case of the ${}_0F_1$ equation.

Separation of variables in the Laplacian on $\bbR^d$ leads  to the Bessel equation.
If $d$ is odd, one obtains the Bessel equation with half-integer parameters---this corresponds to the non-degenerate case. However if $d$ is even, then one obtains integer parameters---which is the degenerate case.
This is related to the fact that the resolvent of the Laplacian has a logarithmic singularity in even dimensions. When studying the wave equation one observes a similar phenomenon:
the so-called Hadamard solutions and the Feynman propagator have  a logarithmic singularity in even dimensions, e.g. in the dimension 4 of our space-time (see eg. Appendix E of \cite{H}).

In the remaining part of the introduction we will give a short resume of the results of our paper. Unlike in the rest of the paper, we will discuss in parallel the three equations  (\ref{f0}), (\ref{f1}) and (\ref{f2}).
In the rest of the paper there will be  separate sections devoted to each of these three equations. Another difference between the introduction and the rest of the paper is the choice of parameters. In the introduction we use the traditional parameters $a,b,c$. In the remaining part of the paper, instead of $a,b,c$
we will use the parameters
\begin{alignat}{5}
  \alpha&:=c-1,&&&\label{para0}\\
  \alpha&:=c-1,\quad  & \theta: =-c+2a;\label{para1}&&\\
\alpha&:=c-1,\quad&\beta: =a+b-c,\quad&\mu:=b-a.\label{para2}
\end{alignat}
These parameters,  used in \cite{D} and called there {\em Lie-algebraic}, are more convenient if we want to express symmetries of hypergeometric type equations. In the degenerate case, the parameter $\alpha$ will be usually called $m$.

\subsection{Resum\'e{} of constructions and results of the paper}

Let us start with introducing the linear operators
\begin{align}
  \mathcal{F}(c)&:=z\partial_z^2+c\partial_z-1,\label{01}\\
  \mathcal{F}(a;c)&:=z\partial_z^2+(c-z)\partial_z-a,\label{11}\\
 \mathcal{F}(a,b;c)&:=  z(1-z)\partial_z^2+\big(c-(a+b+1)z\big)\partial_z-ab
  .\label{21}\end{align}
Solving the equations  (\ref{f0}), (\ref{f1}), resp. (\ref{f2}) means
 finding the nullspace of (\ref{01}), (\ref{11}), resp. (\ref{21}).

If $c\neq0,-1,-2,\dots$, then the only solution of these equations $\sim1$ at $0$ is
\begin{align}
  \text{the {\em ${}_0F_1$ function} }&\notag\\
  F(c;z)&:=\sum_{n=0}^\infty
\frac{1}{
  (c)_n}\frac{z^j}{n!}\label{f.01}\\
&=
\e^{\mp2\sqrt{z}}F\Big(\frac{2c-1}{2};
2c-1;\pm4\sqrt{z}\Big),\notag\\
  \text{the {\em ${}_1F_1$} or the {\em confluent function} }&\notag\\
 F  (a;c;z)&:=
\sum_{n=0}^\infty
\frac{(a)_n}{
  (c)_n}\frac{z^n}{n!},\label{f.11}\\
\text{the {\em ${}_2F_1$} or the {\em hypergeometric function} }&\notag\\
  F  (a,b;c;z)&:=
\sum_{n=0}^\infty
\frac{(a)_n(b)_n}{
  (c)_n}\frac{z^n}{n!}.\label{f.21}
\end{align}

Often, it is more convenient to normalize differently
these functions:
\begin{align}
  {\bf F}  (c;z)&:=\frac{F(c;z)}{\Gamma(c)}=
\sum_{n=0}^\infty
\frac{1}{
\Gamma(c+n)}\frac{z^n}{n!},\label{f01}\\
 {\bf F}  (a;c;z)&:=\frac{F(a;c;z)}{\Gamma(c)}=
\sum_{n=0}^\infty
\frac{(a)_n}{
  \Gamma(c+n)}\frac{z^n}{n!},\label{f11}\\
 {\bf F}  (a,b;c;z)&:=\frac{F(a,b;c;z)}{\Gamma(c)}=
\sum_{n=0}^\infty
\frac{(a)_n(b)_n}{
  \Gamma(c+n)}\frac{z^n}{n!}\label{f21},
\end{align}
so that they are defined for all $c$.

It is easy to check that the equations (\ref{01}), (\ref{11}) and (\ref{21}) have another solution
\begin{align}
  z^{1-c}\mathbf{F}(2-c;z),&\label{fi01}\\
  z^{1-c}\mathbf{F}(a+1-c;2-c;z),&\label{fi11}\\
  z^{1-c}\mathbf{F}(b+1-c,a+1-c;2-c;z).\label{fi21}&
  \end{align}

However, for integer $c$,  these two solutions are proportional to one another. In fact, for $m\in\bbZ$ we have
\begin{align}\bF
(1+m;z)&=\sum_{n=\max(0,-m)}\frac{1}{n!(m+n)!}z^n,\\
\bF
(a;1+m;z)&=\sum_{n=\max(0,-m)}\frac{(a)_n}{n!(m+n)!}z^n,\\
\bF
(a,b;1+m;z)&=\sum_{n=\max(0,-m)}\frac{(a)_n(b)_n}{n!(m+n)!}z^n.
\end{align}
This easily implies the following identities  for $m\in\bbZ$:
\begin{align} \bF(1+m;z)&=z^{-m}\bF(1-m;z),\\
  (a-m)_m\bF(a;1+m;z)&=z^{-m}\bF(a-m;1-m;z),\\
  (a-m)_m(b-m)_m\bF(a,b;1+m;z)&=z^{-m}
\bF(a-m,b-m;1-m;z)
.\end{align}

Thus, if $c$ is an integer, the pairs of functions (\ref{f01}), (\ref{fi01});
(\ref{f11}), (\ref{fi11}); (\ref{f21}), (\ref{fi21}) 
do not span the whole solution space. This is the reason why this case
is called {\em degenerate}. 
As we see from the above identities, when discussing the degenerate case it is convenient to replace  $c$ with $1+m$, $m\in\bbZ$.

The pairs of functions (\ref{f01}), (\ref{fi01});
(\ref{f11}), (\ref{fi11}); (\ref{f21}), (\ref{fi21}) 
  are solutions
with a power-like behavior at $0$. The equations
 (\ref{f0}), (\ref{f1}) and (\ref{f2}) possess also other distinguished solutions. In particular, it is natural to introduce the following solutions, which have a simple behavior at infinity:
\begin{align}
U(c,z)&:=\e^{-2\sqrt z} z^{-\frac{c}{2} +\frac14}
F\Big(c-\frac{1}{2},\frac{3}{2}-c;-;-\frac{1}{4\sqrt z}\Big),\label{u0}\\
  U(a;c;z)&:=z^{-a}
  F\Big(a,a+1-c;-;-\frac{1}{z}\Big),\label{u1}\\
  \bU(a,b;c;z)&:= 	(-z)^{-a}\bF\Big(a,a-c+1;a-b+1;\frac{1}{z}\Big).\label{u2}
\end{align}
(\ref{u0}) is closely related to the MacDonald function, see (\ref{macdo}). (\ref{u1}) is sometimes called {\em Tricomi's function}.
(\ref{u2}) is one of elements of the so-called {\em Kummer's table}, which gives a list of standard solutions to the hypergeometric equation.

Both in (\ref{u0}) and (\ref{u1}) we use the  {\em ${}_2F_0$ function},
which is perhaps less known.  Note that it is not analytic at zero---it has  a branch point there. It satisfies
\begin{align}
    F  (a,b;-;z)&\sim
\sum_{n=0}^\infty
\frac{(a)_n(b)_n}{
n! }z^n\label{f.21a}
\end{align}
in  sectors $|\mathrm{arg}(z)|>\epsilon$ for any $\epsilon>0$.
For its definition and basic properties the reader can consult e.g. \cite{D}.

Now the (\ref{u0}), (\ref{u1}), resp. (\ref{u2}) are additional solutions of the equations  (\ref{f0}), (\ref{f1}), resp. (\ref{f2}) 
typically not proportional to  (\ref{f01}), 
(\ref{f11}), resp.   (\ref{f21}). They can be used in the degenerate case to obtain the full spaces of solutions.

In our paper we also analyze a different method of solving the  equations (\ref{f0}), (\ref{f1}) and (\ref{f2}) in the degenerate case.
This method is based on the observation that all solutions
not proportional to
(\ref{f01}), (\ref{f11}) and (\ref{f21}) 
have a logarithmic singularity. It is natural to look for solutions of
the equations (\ref{01}), (\ref{11}) and (\ref{21})
in the form
\begin{align}
  &\log z\bF(1+m;z)+\bD(1+m;z),\label{d01}\\
  &\log z\bF(a;1+m;z)+\bD(a;1+m;z),\label{d11}\\
  &\log(- z)\bF(a,b;1+m;z)+\bD(a,b;1+m;z).\label{d21}
\end{align}

Note that we use the so-called {\em principal branch of the logarithm}, so that the domain of $\log z$ is $\bbC\backslash]-\infty,0]$. Consequently, the domain of $\log(-z)$ is $\bbC\backslash[0,\infty[$. Solutions of the ${}_2F_1$ equation usually have a branch point at $1$, therefore it is more convenient in this case to replace $\log z$ with $\log(-z)$.

The functions 
$\bD(1+m;z)$, $\bD(a;1+m;z)$, resp.
$\bD(a,b;1+m;z)$ solve  the inhomogeneous equation
\begin{align}
  \cF(1+m)\bD(1+m;z)&=-\frac{m}{z}\bF(1+m;z)\notag\\&\hspace{4ex}-2\partial_z\bF(1+m;z),\label{inho1}\\
  \cF(a;1+m)\bD(a;1+m;z)&=\Big(1-\frac{m}{z}\Big)\bF(a;1+m;z)\notag\\&\hspace{4ex}
  -2\partial_z\bF(a;1+m;z),\label{inho2}\\
  \cF(a,b;1+m)\bD(a,b;1+m;z)&=\Big(a+b-\frac{m}{z}\Big)\bF(a,b;1+m;z)\notag\\&
  \hspace{4ex}  +2(z-1)\partial_z\bF(a,b;1+m;z).\label{inho3}
\end{align}
We will show that these equations have solutions meromorphic around $0$.
This does not fix them completely, because one can always add a multiple of
(\ref{f01}), (\ref{f11}), resp. (\ref{f21}). There exist however  canonical solutions,  which we introduce in our paper.

Using the digamma function $\psi$ (see (\ref{digamma})), for $m=0,1,2,\dots$ we define	\begin{align}
	  \bD(1+m;z) :=& \sum_{k=1}^{m} (-1)^{k-1}\frac{(k-1)!}{(m-k)!}z^{-k}\notag\\& - \sum_{k=0}^\infty \big(\psi(k+1)+\psi(k+1+m)\big)\frac{1}{k!(m+k)!}z^k\label{dd01},\\
	  \bD(a;1+m;z) :=&\sum_{k=1}^{m}(-1)^{k-1}\frac{(k-1)!(a)_{-k}}{(m-k)!}z^{-k}\notag\\& \hspace{-7ex}+\sum_{k=0}^\infty\big(\psi(a+k)-\psi(k+1)-\psi(k+1+m)\big)\frac{(a)_k}{(m+k)!k!}z^k,\label{dd11}\\
 	  \bD(a,b;1+m;z): =
&\sum_{k=1}^{m}(-1)^{k-1}\frac{(k-1)!(a)_{-k}(b)_{-k}}{(m-k)!}z^{-k}\notag\\
&\hspace{-25ex}           +\sum_{k=0}^\infty \big(
\psi(a+k)+\psi(1-b-k)-\psi(k+1)-\psi(m+1+k)\big)
\frac{(a)_k(b)_k}{(m+k)!}\frac{z^k}{k!}
.\label{dd21}
	\end{align}
We extend these definitions to negative integers by setting
\begin{align}
  \bD(1-m;z)&:=z^m\bD(1+m;z),\\
    \bD(a;1-m;z)&:=z^m\bD(a;1+m;z),\\
  \bD(a,b;1-m;z)&:=z^m\bD(a,b;1+m;z).
  \end{align}
The functions $\bD(1+m;z)$,  $\bD(a;1+m;z)$, resp.  $\bD(a,b;1+m;z)$ are solutions of
(\ref{inho1}), (\ref{inho2}), resp. (\ref{inho3}).
Therefore, (\ref{d01}), (\ref{d11}), resp. (\ref{d21}) are solutions of
 (\ref{f0}), (\ref{f1}) and (\ref{f2}).

We prove that with the definitions (\ref{dd01}), (\ref{dd11}), resp. (\ref{dd21}), the functions
(\ref{d01}), (\ref{d11}), resp. (\ref{d21}) are proportional to the special solutions (\ref{u0}), (\ref{u1}), resp. (\ref{u2}):
\begin{align}
  	U(1+m;z)& = \frac{(-1)^{m+1}}{\sqrt{\pi}}\big(\log z\cdot\bF(1+m;z) +\bD(1+m;z) \big),\\
	U(a;1+m;z)& = \frac{(-1)^{m+1}}{\Gamma(a-m)}\big(\log z\cdot \bF(a;1+m;z) + \bD(a;1+m;z)\big),\label{U-theta-m0}\\
        \bU(a,b;1+m;z)& = \frac{(-1)^{m+1}}{\Gamma(1-b)\Gamma(a-m)}\\
        &\quad\times\big(\log(- z) \cdot \bF(a,b;1+m;z) + \bD(a,b;1+m;z)\big).
\end{align}

The special functions
 $\bD(1+m;z)$,  $\bD(a;1+m;z)$, and  $\bD(a,b;1+m;z)$ satisfy various identities, which we derive in our paper.
In particular,
we compute  recurrence relations satisfied by these functions.
We show that they are very similar to the usual recurrence relations for functions $\bF(1+m;\cdot)$,
 $\bF(a;1+m;\cdot)$ and
$\bF(a,b;1+m;\cdot)$, up to terms proportional to
 $\bF(1+m;\cdot)$,
 $\bF(a;1+m;\cdot)$ and
$\bF(a,b;1+m;\cdot)$ themselves.
We also derive  quadratic relations, which involve quadratic transformations of the independent variable and doubling of parameters. 

\subsection{Bibliographical remarks}

(\ref{f0}), (\ref{f1}) and (\ref{f2}) or equivalent equations
have been studied by mathematicians for more than two centuries. Therefore, the material of our paper can be traced back to many classic papers and books, such as the textbook of Whittaker and Watson \cite{W-Wh}.

\cite{W-Wh} contains in particular a detailed analysis of the Bessel equation,
closely related to the ${}_0F_1$ equation.
In particular that the function $U(1+m;z)$ is closely related to the MacDonald function $K_m(z)$, whose degenerate case is 
analyzed in Sect. 17.71 of \cite{W-Wh}. A more complete study of the Bessel equation can be found in \cite{W}.

The  ${}_1F_1$ equation is essentially equivalent to the Whittaker equation, which is the subject of a treatise by Buchholz \cite{B}.  $U(a;1+m;z)$ is the well-known Tricomi's function---see Equation 2.25a in \cite{B} for the closely-related Whittaker function.  Buchholz analyzes its degenerate case in Sect. 2.5. He introduces a function equivalent to our $\bD(a;1+m;z)$, denoting it ${\mathfrak M}_{\kappa,\frac{m}{2}}(z)$.

Another treatise devoted to the ${}_1F_1$ equation was written by Slater
\cite{L1}. Its section 1.5 contains a discussion of the degenerate case---see in particular equation 1.5.24, equivalent to our 
\eqref{U-theta-m0}. As Slater remarks, this equation was first stated incorrectly in the literature: negative powers was missing in the formula for $U(a;1+m;z)$ in \cite{AW}. The correct formula was given 20 years later in \cite{Ar}.

The Legendre and the associated Legendre equation are the most important degenerate
cases of the ${}_2F_1$ equation. They appear e.g. in the harmonic analysis on the sphere.  They were studied e.g.  in Chap. XV of \cite{W-Wh} or in \cite{L2}. The  Legendre function of the second kind, as well as the associated Legendre function of the second kind, discussed in Sec. 15.3 of \cite{W-Wh}, are closely related to 
 $\bU(a,b;1+m;z)$.

Among the more recent references, let us mention \cite{MO}, and especially
\emph{Digital Library of Mathematical Functions}, \cite{NIST}. In 
  Equation 15.10.8 of \cite{NIST}. 
one can find the function that we call  $\bU(a,b;1+m;z)$. 
The so-called associated Legendre functions of the second kind can be found  in 15.9.16--23 of \cite{NIST}.

In our opinion,  in the literature the degenerate case of 
(\ref{f0}), (\ref{f1}) and (\ref{f2}) is usually treated
in a rather ad hoc way.
We think that this subject deserves a more systematic treartment.
To this end we   introduce the functions
$\bD(1+m;z)$,  $\bD(a;1+m;z)$, and  $\bD(a,b;1+m;z)$
and derive their various properties. Most of these properties (e.g. recurrence relations and quadratic relations) seem to be new.

\subsection{Notation}

We will often deal with multivalued analytic functions such as $z^\alpha$ and 
$\log(z)$. The standard form of these functions, called the {\em pricipal branch} has the domain $\bbC\backslash]-\infty,0]$. We can sometimes ''rotate'' these functions. For instance,  $(-z)^\alpha$ or $\log(-z)$ have
    the domain $\bbC\backslash]0,\infty]$. Note the relations
        \begin{align} z^\alpha&=\e^{\i\pi\alpha}(-z)^{\pm\alpha},\quad\pm\Im z>0;\label{multi}
          \\
          \log(-z)&=\log(z)\pm\i\pi,\quad\pm\Im z>0.
          \end{align}

	\section{The ${}_0F_1$ equation}
        \subsection{The ${}_0F_1$ function}

In this section we discuss the  ${}_0F_1$ equation, which is defined by
 the operator
\begin{eqnarray*}
\mathcal{F}_\alpha &:=&z\partial_z^2+(\alpha +1)\partial_z-1.
\end{eqnarray*}
It annihillates the  ${}_0F_1$ function $F_\alpha(z)=F(\alpha+1;z)$. We will mostly use its normalized version (\ref{f01}):
\[ {\bf F}_\alpha  (z):=\frac{F_\alpha(z)}{\Gamma(\alpha+1)}=
\sum_{n=0}^\infty
\frac{1}{
\Gamma(\alpha+1+n)}\frac{z^n}{n!}.\]

Another solution is
\[z^{-\alpha}\mathbf{F}_{-\alpha}(z).\]

\subsection{Solution with a simple behavior at infinity}

It is natural to introduce another solution
\begin{eqnarray*}
U_\alpha(z)&:=&\e^{-2\sqrt z} z^{-\frac{\alpha}{2} -\frac14}
F\Big(\frac{1}{2}+\alpha,\frac{1}{2}-\alpha;-;-\frac{1}{4\sqrt z}\Big).
\end{eqnarray*}


We have a connection formula
\begin{eqnarray}
U_\alpha (z)
&=&\frac{\sqrt\pi}{\sin\pi (-\alpha )} {\bf F}  _\alpha (z)
+\frac{\sqrt \pi}{\sin\pi \alpha }
z^{-\alpha } {\bf F}  _{-\alpha }(z),\label{connection-Um}
\end{eqnarray}
and a discrete symmetry
\[U_\alpha (z)=z^{-\alpha }U_{-\alpha }(z).\]

\subsection{Degenerate case}
If $\alpha=m\in \bbZ$, then
		\begin{equation}\label{0F1-funct}
			\bF_m(z)=\sum_{k=\max\{0,-m\}}^{\infty}\frac{1}{(k+m)!}\frac{z^k}{k!}.
		\end{equation}
                Hence
                \begin{equation}\label{degener-Fm}\bF_m(z)=z^{-m}\bF_{-m}(z),\end{equation}
so that $\bF_m$ and $z^{-m}\bF_{-m}$ are no longer linearly independent.

Assume first that    $m=0,1,2\dots$.    We look for another function annihilated  by the  operator $\cF_m$
 which has the form
	\begin{equation}\label{eq-for-D}
\log z \cdot \bF_m(z) +\bD_m(z),
	\end{equation}
	where $\bD_m(\cdot)$ is a function meromorphic around zero. Note that we have some freedom in the choice of $\bD_m(\cdot)$---we may add to it any multiple of $\bF_m(\cdot)$, i.e. the solution of the homogeneous problem.
	
	The equation
        \begin{equation}\label{0F1-op}
          \big(z\pder_z^2 +(m+1)\pder_z -1\big)
          \big(\log z \cdot \bF_m(z) +\bD_m(z)\big)=0
	\end{equation}
leads  to an inhomogeneous equation for $\bD_m(\cdot)$:
	\begin{equation}\label{eq-for-D-2}
		\big(z\pder_z^2 + (m+1)\pder_z -1\big)\bD_m(z) = -\frac{m}{z}\bF_m(z)-2\bF_m'(z).
	\end{equation}
	Suppose $\bD_m(z) = \mathop{\sum}\limits_{n=-N}^{\infty}d_n z^n$ for some $N$ (whose value will we find). Equation (\ref{eq-for-D-2}) reads then
	\begin{equation}
		\sum_{n=-N}^{\infty}d_n \big((n+m)nz^{n-1} - z^n\big) = -\sum_{n=0}^{\infty}\frac{m+2n}{(m+n)! n!}z^{n-1},
	\end{equation}
	which means that
	\begin{eqnarray}
	d_{-N}N(N-m)z^{-N-1} &+& \sum_{n=-N} ^\infty \big(d_{n+1}(n+1)(n+m+1) - d_n\big)z^n\\
	 &=& -\frac{1}{(m-1)!z} - \sum_{n=0}^\infty \frac{m+2n+2}{(m+n+1)!(n+1)!}z^n.\nonumber
	\end{eqnarray}
	For this equality to be true, the coefficients $d_n$ with negative $n$ have to fulfil
	\begin{eqnarray}
		d_{-1} &=& \frac{1}{(m-1)!},\nonumber \\
		d_{-k} &=& -(k-1)(m+1-k)d_{-k+1}  \mbox{ for }k=2,3,\dots.
	\end{eqnarray}
	This recurrence can be easily solved.  For $k=1,2,\dots$ it gives
	\begin{equation}
	d_{-k} = (-1)^{k-1}\frac{(k-1)!}{(m-k)!},
	\end{equation}
	where the factorial is understood in the sense of the $\Gamma$ function, if needed.
	
This shows us that $N=m$ (because $d_{-k}=0$ for $k>m$).
	
	For $n=0,1\dots$ we have the recursion formula
	\begin{equation}
	d_{n+1} = \frac{1}{(n+1)(n+m+1)}\Big(d_n - \frac{m+2n+2}{(m+n+1)!(n+1)!}\Big).
	\end{equation}
	
	It is solved by
	\begin{equation}\label{recur-exact}
		d_n = - \frac{1}{n!(m+n)!}\big(H_n + H_{n}(m) +{C}\big),
	\end{equation}
	where ${C}\in\bbC$ and $H_n$ and $H_n(m)$ are defined in (\ref{haka0}) and (\ref{haka}). The choice of ${C}$ corresponds to adding a multiple of $\bF_m(z)$. The formula (\ref{recur-exact}) can be proved by a simple induction argument.
	
	We define the function $\bD_m(z)$ for $m=0,1,\dots$ by
	\begin{equation}
		\bD_m(z) := \sum_{k=1}^{m} (-1)^{k-1}\frac{(k-1)!}{(m-k)!}z^{-k} - \sum_{k=0}^\infty \frac{\psi(k+1)+\psi(k+m+1)}{k!(m+k)!}z^k.
	\end{equation}
	hence we choose
        \begin{equation}{C}:=2\gamma-H_m,\end{equation} where $\gamma$ is
        Euler's constant        (see (\ref{psiha})).
	We also set
        \begin{equation}
          \bD_{-m}(z)=z^m\bD(z).
        \end{equation}
        Thus we defined $\bD_m(z)$ for all integer $m$.
For $m$ positive, it has a pole at zero of order $m$, for  $m$ negative or zero it is analytic.
	
	A close connection exists between $\bD_m$ and $U_m$ function:
        \begin{theorem} For $m\in\bbZ$,
	  \begin{equation}
	U_m(z) = \frac{(-1)^{m+1}}{\sqrt{\pi}}\big(\log z\cdot\bF_m(z) +\bD_m(z) \big).\label{close1}
	  \end{equation}
        \label{th1}  \end{theorem}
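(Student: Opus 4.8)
The plan is to derive \eqref{close1} by analytic continuation, letting the non-integer parameter $\alpha$ tend to the integer $m$ in the connection formula \eqref{connection-Um}. First I would reduce to the case $m=0,1,2,\dots$. Indeed, the discrete symmetry $U_\alpha(z)=z^{-\alpha}U_{-\alpha}(z)$ gives $U_{-m}(z)=z^{m}U_{m}(z)$, the degeneracy relation \eqref{degener-Fm} gives $z^{m}\bF_{m}(z)=\bF_{-m}(z)$, and the definition $\bD_{-m}(z)=z^{m}\bD_m(z)$ does the same for $\bD$. Since $(-1)^{m+1}=(-1)^{-m+1}$, multiplying the identity for $m$ through by $z^{m}$ (the factor $\log z$ being merely a scalar here) yields the identity for $-m$. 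Hence it suffices to treat $m\ge 0$.

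Next I would rewrite \eqref{connection-Um}, using $\sin\pi(-\alpha)=-\sin\pi\alpha$, as
\[
U_\alpha(z)=C(\alpha)\,B_\alpha(z),\qquad C(\alpha):=\frac{\sqrt\pi}{\sin\pi\alpha},\qquad B_\alpha(z):=z^{-\alpha}\bF_{-\alpha}(z)-\bF_\alpha(z).
\]
By \eqref{degener-Fm} the bracket vanishes at $\alpha=m$, i.e. $B_m(z)=0$, whereas $C(\alpha)$ has a simple pole there. Writing $\alpha=m+\eps$ and using $\sin\pi(m+\eps)=(-1)^m\sin\pi\eps=(-1)^m\pi\eps+O(\eps^3)$, one gets $C(m+\eps)=\frac{(-1)^m}{\sqrt\pi\,\eps}\bigl(1+O(\eps^2)\bigr)$, while $B_{m+\eps}(z)=\eps\,\partial_\alpha B_\alpha(z)\big|_{\alpha=m}+O(\eps^2)$. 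The product then has the finite limit
\[
U_m(z)=\frac{(-1)^m}{\sqrt\pi}\,\partial_\alpha B_\alpha(z)\Big|_{\alpha=m}.
\]

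It remains to compute this derivative and identify it with $-\log z\cdot\bF_m(z)-\bD_m(z)$. Differentiating $z^{-\alpha}\bF_{-\alpha}(z)$ produces the term $-\log z\cdot z^{-m}\bF_{-m}(z)=-\log z\cdot\bF_m(z)$ together with the two $\psi$-series $\partial_\alpha\bF_\alpha|_{\alpha=m}$ and $z^{-m}\partial_\alpha\bF_\alpha|_{\alpha=-m}$. From $\partial_\alpha\frac{1}{\Gamma(\alpha+1+n)}=-\frac{\psi(\alpha+1+n)}{\Gamma(\alpha+1+n)}$ the term at $\alpha=m$ gives exactly the regular series $-\sum_{k\ge0}\frac{\psi(m+k+1)}{k!(m+k)!}z^k$. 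The delicate point, which I expect to be the main obstacle, is the term at $\alpha=-m$: for $0\le n<m$ the argument $\alpha+1+n$ hits a nonpositive integer where $1/\Gamma$ vanishes, so the naive $\psi$-formula fails; instead one must use $\frac{d}{ds}\frac{1}{\Gamma(s)}\big|_{s=-j}=(-1)^j j!$ (the simple zero of $1/\Gamma$ coming from the residue of $\Gamma$). With $j=m-1-n$ these $m$ exceptional terms, after multiplication by $z^{-m}$ and the substitution $k=m-n$, produce precisely the principal part $\sum_{k=1}^{m}(-1)^{k-1}\frac{(k-1)!}{(m-k)!}z^{-k}$ of $\bD_m$, while the regular terms $n\ge m$ contribute $-\sum_{k\ge0}\frac{\psi(k+1)}{k!(m+k)!}z^k$. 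Collecting the two $\psi$-series reproduces the definition of $\bD_m(z)$, so that $\partial_\alpha B_\alpha|_{\alpha=m}=-\log z\cdot\bF_m(z)-\bD_m(z)$ and \eqref{close1} follows. It is exactly this residue bookkeeping at the $\Gamma$-poles that converts the ``missing'' low-order terms of $\bF_{-\alpha}$ into the meromorphic principal part of $\bD_m$, and one must check that it matches the recursively defined coefficients $d_{-k}$.
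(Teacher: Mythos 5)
Your proposal is correct and follows essentially the same route as the paper's own proof: both pass to the limit $\alpha\to m$ in the connection formula (your $\eps$-expansion of $\sin\pi\alpha$ is just de l'Hospital's rule made explicit), and both obtain $\bD_m$ as $\partial_\alpha\bF_\alpha\big|_{\alpha=m}+z^{-m}\partial_\alpha\bF_\alpha\big|_{\alpha=-m}$, with the principal part coming from $\partial_s\Gamma(s)^{-1}\big|_{s=-j}=(-1)^jj!$ exactly as you describe. Your explicit reduction to $m\ge 0$ via the discrete symmetry is a small completeness bonus, since the paper's computation tacitly treats only nonnegative $m$.
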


        \proof
        By (\ref{degener-Fm}) we can apply the de l'Hospital rule.
        As a preparation, we compute
        \begin{align}
          \partial_\alpha \bF_\alpha(z)&=-\sum_{j=0}^\infty\psi(\alpha+j+1)\frac{z^j}
                  {\Gamma(\alpha+j+1)j!},\\
                            \partial_\alpha \bF_\alpha(z)\Big|_{\alpha=m}&=-\sum_{j=0}^\infty\psi(m+j+1)\frac{z^j}
                                    {(m+j)!j!},\\
                                    \partial_\alpha \bF_\alpha(z)\Big|_{\alpha=-m}&=
\sum_{j=0}^{m-1}\frac{(-1)^{-m+j+1}(-m+j+1)!z^j}{j!}\\
                     & \quad              -\sum_{j=m}^\infty\psi(-m+j+1)\frac{z^j}
    {(-m+j)!j!}\\
   &=
z^m\sum_{k=1}^{m}\frac{(-1)^{k+1}(k-1)!z^{-k}}{(m-k)!}\\
                     &      \quad         -z^m\sum_{k=0}^\infty\psi(k+1)\frac{z^k}
                  {k!(k+m)!}.
        \end{align}
        Now we can write
        \begin{align}
          U_m(z)&=-\lim_{\alpha\to m}
          \frac{\sqrt\pi}{\sin\pi \alpha }\big( {\bf F}  _\alpha (z)
-
z^{-\alpha } {\bf F}  _{-\alpha }(z)\big)\\
&=\frac{(-1)^{m+1}}{\sqrt{\pi}}\Big(\partial_\alpha\bF_\alpha(z)\Big|_{\alpha=m}
+z^{-m}\partial_\alpha\bF_\alpha(z)\Big|_{\alpha=-m}+\log z\cdot \bF_m(z)\Big)\\
&=\frac{(-1)^{m+1}}{\sqrt{\pi}}
\sum_{k=1}^{m} (-1)^{k-1}\frac{(k-1)!}{(m-k)!}z^{-k}\\&\quad
- \frac{(-1)^{m+1}}{\sqrt{\pi}}\sum_{k=0}^\infty \frac{\psi(k+1)+\psi(k+m+1)}{k!(m+k)!}z^k\\
&\quad+\frac{(-1)^{m+1}}{\sqrt{\pi}}
	\log z\cdot \bF_m(z).
\end{align}
\qed

\subsection{Recurrence relations}

The function $F_\alpha$ satisfies the {\em recurrence relations}
 \begin{align*}
 \partial_z {\bf F}  _\alpha (z)&= {\bf F}  _{\alpha +1}(z),
 \\
 \left(z\partial_z+\alpha\right) {\bf F}  _\alpha (z)&= {\bf F}  _{\alpha -1}(z).
\end{align*}

The recurrence relations for $(\log z \bF_m + \bD_m)$ are the same as for $ \bF_m$.
They lead to the following recurrence relations for $\bD_m$:
		\begin{eqnarray}
			\pder_z\bD_m(z) &=& \bD_{m+1}(z) - \frac{\bF_m(z)}{z},\nonumber\\
			(z\pder_z + m)\bD_m(z) &=& \bD_{m-1}(z) - \bF_m(z).
		\end{eqnarray}
		They imply the {\em contiguity relation}
		\begin{equation}
			\bD_m(z) = \frac{1}{m}\big(\bD_{m-1}(z) - z\bD_{m+1}(z)\big).
		\end{equation}

\subsection{Bessel equation and modified Bessel equation}

	The functions $\bF_m$ and $\bD_m$ are closely related to the well-known solutions of modified Bessel equation:
	\begin{itemize}
		\item to the modified Bessel function
		\begin{equation}
			I_m(z) = \Big(\frac{z}{2}\Big)^m\bF_m\Big(\frac{z^2}{4}\Big);
		\end{equation}
		
		\item to the MacDonald function (the modified Bessel function of the second kind)
		\begin{equation}
			K_m(z) = (-1)^{m+1}\Big(\frac{z}{2}\Big)^m \Big(\log\big(\frac{z^2}{4}\big)\bF_m\big(\frac{z^2}{4}\big) +\bD_m\big(\frac{z^2}{4}\big) \Big).\label{macdo}
		\end{equation}
	\end{itemize}
	
	Similarly, the functions $\bF_m$ and $\bD_m$ are also closely related to respective solutions of Bessel equation, namely
	\begin{itemize}
		\item to the Bessel function
		\begin{equation}
		J_m(z) = \Big(\frac{z}{2}\Big)^m\bF_m\Big(-\frac{z^2}{4}\Big);
		\end{equation}
		
		\item to the Hankel functions of the first and the second type, respectively
		\begin{eqnarray}
			H^{(1)}_m(z) &=& -\frac{\i}{\pi} \Big(\e^{\frac{\i\pi}{2}}\frac{z}{2}\Big)^m \Big(\log\big(\e^{-\i\pi}\frac{z^2}{4}\big)\bF_m\big(\e^{-\i\pi}\frac{z^2}{4}\big) +\bD_m\big(\e^{-\i\pi}\frac{z^2}{4}\big) \Big),\notag\\
		H^{(2)}_m(z) &=& \frac{\i}{\pi} \Big(\e^{-\frac{\i\pi}{2}}\frac{z}{2}\Big)^m \Big(\log\big(\e^{\i\pi}\frac{z^2}{4}\big)\bF_m\big(\e^{\i\pi}\frac{z^2}{4}\big) +\bD_m\big(\e^{\i\pi}\frac{z^2}{4}\big) \Big).\label{hank}
		\end{eqnarray}
	\end{itemize}

        \section{The ${}_1F_1$ equation}

        \subsection{The  ${}_1F_1$ function}

        In the parameters $\theta,\alpha$ introduced in
(\ref{para1}),
        the  ${}_1F_1$   operator (\ref{11}) becomes
\begin{eqnarray*}
\mathcal{F}_{\theta ,\alpha}
&=&z\partial_z^2+(1+\alpha-z)\partial_z-\frac{1}{2}(1+\theta +\alpha)
,\end{eqnarray*}
It annihillates the ${}_1F_1$ function
 $F_{\theta,\alpha}(z)=F\bigl(\frac{1+\alpha+\theta }{2};\alpha+1;z\bigr)$. We will mostly use its normalized version (\ref{f11}):
\begin{align} 
 {\bf F}  _{\theta ,\alpha}(z)&:=\frac{ F  _{\theta ,\alpha}(z)}{\Gamma(\alpha+1)}
=
\sum_{n=0}^{\infty}\frac{(\frac{1+\alpha+\theta }{2})_n}{\Gamma(\alpha+n+1)n!}z^n.
		\end{align}

There is also another solution
\begin{equation}
	z^{-\alpha}\bF_{\theta,-\alpha}(z).
\end{equation}

\subsection{Tricomi's function}

One can also introduce a solution of the confluent equation with a simple behavior at infinity. It is sometimes called {\em Tricomi's function} 
\begin{align}
  U_{\theta,\alpha}(z)&:=z^{\frac{-1-\theta +\alpha}{2}}
  F\Big(\frac{1-\alpha+\theta}{2},\frac{1+\alpha+\theta}{2};-;-z^{-1}\Big).
  \end{align}

We have a connection formula
\begin{eqnarray*}
U_{\theta,\alpha}(z)
&=&\frac{\pi  {\bf F}  _{\theta ,\alpha}(z)}{\sin{\pi(-\alpha)}\Gamma\left(\frac{1+\theta -\alpha}{2}\right)}
+\frac{\pi z^{-\alpha} {\bf F}  _{\theta ,-\alpha}(z)}{\sin\pi \alpha\Gamma\left(\frac{1+\theta +\alpha}{2}\right)}
,
\end{eqnarray*}
and a discrete symmetry
\begin{equation}U_{\theta,-\alpha}(z) = z^\alpha U_{\theta,\alpha}(z).\end{equation}

	\subsection{Degenerate case}	
        If $\alpha=m\in \bbZ$, then \[\bF_{\theta,m}(z)=\sum_{n=\max\{0,-m\}}^\infty\frac{\big(\frac{\theta+m+1}{2}\big)_n}
           {n!(m+n)!} z^n.\]
           Therefore,        
           \begin{equation}\Big(\frac{\theta-m+1}{2}\Big)_m
             \bF_{\theta,m}(z)=z^{-m}\bF_{\theta,-m}(z),\end{equation}
so that $\bF_{\theta,m}$ and $z^{-m}\bF_{\theta,-m}$ are no longer linearly independent.
        
		We will look for another solution of the form
		\begin{equation}\label{eq-for-D-conf}
		\log z \cdot \bF_{\theta,m}(z) +\bD_{\theta,m}(z),
		\end{equation}
		where $\bD_{\theta,m}(\cdot)$ is  a meromorphic function around zero. Note that again we have some freedom in the choice of $\bD_{\theta,m}(\cdot)$---we may add to it any multiple of $\bF_{\theta,m}(\cdot)$.
		
 The equation
                \begin{equation}
                 \Big(z\partial_z^2+(1+\alpha-z)\partial_z-\frac{1}{2}(1+\theta +\alpha)\Big)\big(	\log z \cdot \bF_{\theta,m}(z) +\bD_{\theta,m}(z)\big)=0
                \end{equation}
                leads to an inhomogeneous equation
                \begin{equation}\Big(z\partial_z^2+(1+\alpha-z)\partial_z-\frac{1}{2}(1+\theta +\alpha)\Big)\bD_{\theta,m}(z)
                =\Big(1-\frac{m}{z}\Big)\bF_{\theta,m}(z)-2\partial_z\bF_{\theta,m}(z).
                \end{equation}

		Suppose again that $\bD_{\theta,m}=\mathop{\sum}\limits_{n=-N}^{\infty}d_n z^n$. We obtain a recurrence relation
		\begin{eqnarray}\label{recur-conf-1}
		d_{-1} &=& \frac{1}{(m-1)!(\frac{-1+m+\theta}{2})},\nonumber \\
		d_{-k} &=& -\frac{(k-1)(m+1-k)}{(\frac{1+m+\theta}{2}-k)}d_{-k+1}  \mbox{ for }k=2,3\dots,\nonumber\\
		d_{k+1}  &=& \frac{1}{(k+1)(k+m+1)}\Bigg(\Big(k+\frac{1+m+\theta}{2}\Big)d_k \notag\\&&+ \frac{(\frac{1+m+\theta}{2})_k}{(m+k+1)!(k+1)!}\Big((k+1)(m+k+1) - \Big(\frac{1+m+\theta}{2}+k\Big)(m+2k+2)\Big)\Bigg) \nonumber\\\nonumber&&\mbox{ for }k=0,1,2\dots\nonumber
		\end{eqnarray}
		
		These recursion relations
                are solved by
		\begin{eqnarray}\label{recur-conf-2}
		d_{-k} &=& (-1)^{k-1}\frac{(k-1)!\big(\frac{1+m+\theta}{2}\big)_{-k}}{(m-k)!},\mbox{ for }k=1,2\dots\\
		d_k &=& \frac{\big(\frac{1+m+\theta}{2}\big)_k}{(m+k)!k!}\Big(\psi\Big(\frac{1+m+\theta}{2}+k\Big)-H_k-H_{k}(m)+{C}\Big),\mbox{ for }k=0,1,2,\dots,m,\nonumber
		\end{eqnarray}
	        where  ${C}$ is arbitrary. To define $\bD_{\theta,m}$, we choose again \begin{equation}
                  C:=2\gamma-H_m,\end{equation} which leads to
		\begin{eqnarray}
		\bD_{\theta,m}(z) &=& \sum_{k=0}^\infty\frac{(\frac{1+m+\theta}{2})_k}{(m+k)!k!}\Big(\psi\Big(\frac{1+m+\theta}{2}+k\Big)-\psi(k+1)-\psi(k+m+1)\Big)z^k\nonumber\\&&+\sum_{k=1}^{m}(-1)^{k-1}\frac{(k-1)!\big(\frac{1+m+\theta}{2}\big)_{-k}}{(m-k)!}z^{-k}.\label{D-theta-m}
		\end{eqnarray}
                For $m=1,2,\dots$, we set
                \begin{equation}
                  \bD_{\theta,-m}(z):=z^m\bD_{\theta,m}(z).\end{equation}

The  $\bD_{\theta,m}$ function is closely related to Tricomi's function:
\begin{theorem}
  For $m\in\bbZ$, \begin{equation}
		U_{\theta,m}(z) = \frac{(-1)^{m+1}}{\Gamma(\frac{1-m+\theta}{2})}\big(\log z\cdot \bF_{\theta,m}(z) + \bD_{\theta,m}(z)\big).\label{U-theta-m}
		\end{equation}\label{th2}\end{theorem}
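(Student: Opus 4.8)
The plan is to run the limiting argument of Theorem~\ref{th1}, with the connection formula for Tricomi's function replacing (\ref{connection-Um}). Writing $\sin\pi(-\alpha)=-\sin\pi\alpha$, that formula becomes
\[
U_{\theta,\alpha}(z)=\frac{\pi}{\sin\pi\alpha}\,N(\alpha),\qquad
N(\alpha):=-\frac{\bF_{\theta,\alpha}(z)}{\Gamma\big(\frac{1+\theta-\alpha}{2}\big)}
+\frac{z^{-\alpha}\bF_{\theta,-\alpha}(z)}{\Gamma\big(\frac{1+\theta+\alpha}{2}\big)} .
\]
Fixing $m=0,1,2,\dots$, the first step is to verify that $N(m)=0$, so that $U_{\theta,\alpha}$ has a removable singularity at $\alpha=m$. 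This is exactly where the degenerate relation $\big(\frac{\theta-m+1}{2}\big)_m\bF_{\theta,m}(z)=z^{-m}\bF_{\theta,-m}(z)$ enters, together with $\big(\frac{\theta-m+1}{2}\big)_m=\Gamma\big(\frac{1+\theta+m}{2}\big)/\Gamma\big(\frac{1+\theta-m}{2}\big)$: these turn the second term of $N(m)$ into the negative of the first. Since $\frac{d}{d\alpha}\sin\pi\alpha\big|_{\alpha=m}=\pi(-1)^m$, l'Hospital's rule gives $U_{\theta,m}(z)=(-1)^m N'(m)$, and the problem is reduced to computing $N'(m)$.

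To differentiate $N$ I would first record, as in the preparatory computation of Theorem~\ref{th1}, the series for $\partial_\alpha\bF_{\theta,\alpha}$: differentiating the ascending Pochhammer $\big(\frac{1+\alpha+\theta}{2}\big)_n$ produces $\tfrac12\psi\big(\frac{1+\alpha+\theta}{2}+n\big)-\tfrac12\psi\big(\frac{1+\alpha+\theta}{2}\big)$, while differentiating $1/\Gamma(\alpha+n+1)$ produces $-\psi(\alpha+n+1)$. Differentiating $N$ then yields the following contributions: the two derivatives $\partial_\alpha\bF_{\theta,\alpha}|_{\alpha=m}$ and $z^{-m}\big[\partial_s\bF_{\theta,s}\big]_{s=-m}$; the logarithm, coming from $\partial_\alpha z^{-\alpha}=-\log z\cdot z^{-\alpha}$ and the degenerate relation, which contributes precisely $-\log z\cdot\bF_{\theta,m}(z)/\Gamma\big(\frac{1+\theta-m}{2}\big)$; and the terms $+\tfrac12\psi\big(\frac{1+\theta\mp m}{2}\big)$ coming from differentiating the reciprocal Gamma factors $1/\Gamma\big(\frac{1+\theta\mp\alpha}{2}\big)$, which are new compared with the $_0F_1$ case. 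The poles of $1/\Gamma(\alpha+n+1)$ and of $\psi(\alpha+n+1)$ at $\alpha=-m$ for $n<m$ are handled as in Theorem~\ref{th1}; after the factor $z^{-m}$ they reproduce the negative-power part $\sum_{k=1}^m(-1)^{k-1}\frac{(k-1)!(\frac{1+m+\theta}{2})_{-k}}{(m-k)!}z^{-k}$ of $\bD_{\theta,m}$.

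The delicate point, and the step I expect to be the main obstacle, is the bookkeeping of the digamma terms in the coefficient of $z^k$ for $k\ge0$. Two cancellations must work out. First, after the shift $n=m+k$ and the splitting $\big(\frac{1-m+\theta}{2}\big)_{m+k}=\big(\frac{1-m+\theta}{2}\big)_m\big(\frac{1+m+\theta}{2}\big)_k$, the term $\tfrac12\psi\big(\frac{1+m+\theta}{2}+k\big)$ arises once from the $\alpha=m$ branch and once from the $\alpha=-m$ branch, so the two halves combine to the full coefficient $\psi\big(\frac{1+m+\theta}{2}+k\big)$ required in $\bD_{\theta,m}$. Second, the leftover constants $-\tfrac12\psi\big(\frac{1+\theta\pm m}{2}\big)$ from the Pochhammers are cancelled exactly by the $+\tfrac12\psi\big(\frac{1+\theta\pm m}{2}\big)$ terms from the reciprocal Gamma factors, leaving precisely $\psi\big(\frac{1+m+\theta}{2}+k\big)-\psi(k+1)-\psi(k+m+1)$. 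The common factor $\big(\frac{1-m+\theta}{2}\big)_m/\Gamma\big(\frac{1+\theta+m}{2}\big)=1/\Gamma\big(\frac{1+\theta-m}{2}\big)$ gives the stated prefactor, while the sign $(-1)^m$ from l'Hospital together with the overall minus sign of $N'(m)$ produces $(-1)^{m+1}$. This establishes the formula for $m\ge0$; the remaining case $m<0$ follows by the same limiting argument applied at the negative integer, or equivalently from the symmetry $U_{\theta,-\alpha}(z)=z^\alpha U_{\theta,\alpha}(z)$ together with the definition $\bD_{\theta,-m}(z)=z^m\bD_{\theta,m}(z)$, as in Theorem~\ref{th1}.
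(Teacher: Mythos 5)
Your limiting argument for $m=0,1,2,\dots$ is correct, and it is exactly the argument the paper has in mind: the paper omits the proof of Theorem \ref{th2}, saying only that it is similar to that of Theorem \ref{th1}, and your write-up is the faithful adaptation. All the delicate points you flag do close: $N(m)=0$ follows from the degenerate relation as you say; l'Hospital gives $(-1)^mN'(m)$; the negative powers come from $\partial_\alpha\frac{1}{\Gamma(\alpha+n+1)}$ at nonpositive integers and match the finite sum in (\ref{D-theta-m}); and the digamma bookkeeping works out — the two halves of $\psi\big(\frac{1+m+\theta}{2}+k\big)$ add up, while the Pochhammer constant $-\frac12\psi\big(\frac{1+\theta+m}{2}\big)$ of the $\alpha=m$ branch cancels crosswise against the reciprocal-Gamma term $+\frac12\psi\big(\frac{1+\theta+m}{2}\big)$ of the $\alpha=-m$ branch, and vice versa.

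The gap is your last sentence, the case $m<0$. The symmetry shortcut that works for Theorem \ref{th1} fails here, because the prefactor $1/\Gamma\big(\frac{1-m+\theta}{2}\big)$ in (\ref{U-theta-m}) depends on $m$, whereas in the ${}_0F_1$ case the prefactor $1/\sqrt\pi$ does not. Concretely, for $m>0$, combining $U_{\theta,-m}(z)=z^mU_{\theta,m}(z)$ with the case already proved, with $\bF_{\theta,-m}(z)=z^m\big(\tfrac{1-m+\theta}{2}\big)_m\bF_{\theta,m}(z)$ and $\big(\tfrac{1-m+\theta}{2}\big)_m=\Gamma\big(\tfrac{1+m+\theta}{2}\big)/\Gamma\big(\tfrac{1-m+\theta}{2}\big)$, and with the paper's definition $\bD_{\theta,-m}(z)=z^m\bD_{\theta,m}(z)$, you obtain
\begin{align*}
U_{\theta,-m}(z)=z^mU_{\theta,m}(z)
=(-1)^{m+1}\Bigg(\frac{\log z\cdot\bF_{\theta,-m}(z)}{\Gamma\big(\frac{1+m+\theta}{2}\big)}
+\frac{\bD_{\theta,-m}(z)}{\Gamma\big(\frac{1-m+\theta}{2}\big)}\Bigg),
\end{align*}
in which the logarithmic term and the $\bD$ term carry \emph{different} Gamma factors; this is not of the form (\ref{U-theta-m}) at index $-m$, which would require both denominators to be $\Gamma\big(\frac{1+m+\theta}{2}\big)$. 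Your alternative route, running the l'Hospital computation directly at $\alpha=-m$, yields the same expression, since $N(-\alpha)=-z^{\alpha}N(\alpha)$. So neither route establishes the claim for negative $m$; what they actually show is that, with the normalization $\bD_{\theta,-m}:=z^m\bD_{\theta,m}$, the identity (\ref{U-theta-m}) cannot hold at $m$ and $-m$ simultaneously unless the two Gamma factors coincide. This is a defect of the statement itself (which your proof inherits rather than resolves): the clean fix is to prove the theorem for $m\ge0$ and either restrict it to that range or redefine $\bD_{\theta,-m}(z):=z^m\big(\tfrac{1-m+\theta}{2}\big)_m\bD_{\theta,m}(z)$, under which your symmetry argument does go through.
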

The proof is similar to the proof of Theorem \ref{th1}
and is omitted.

		\subsection{Recurrence relations}

		The function $\bF_{\theta,\alpha}$ fulfils the following recurrence relations:
		\begin{eqnarray}\label{recur-1F1}
			\pder_z\bF_{\theta,\alpha }(z) &=& \frac{1+\alpha +\theta}{2}\bF_{\theta+1,\alpha +1}(z), \nonumber\\
			(\pder_z-1)\bF_{\theta,\alpha }(z) &=& \frac{-1-\alpha +\theta}{2}\bF_{\theta-1,\alpha +1}(z),\nonumber\\
			(z\pder_z+\alpha -z)\bF_{\theta,\alpha }(z) &=& \bF_{\theta-1,\alpha -1}(z),\nonumber\\
			(z\pder_z+\alpha )\bF_{\theta,\alpha }(z) &=& \bF_{\theta+1,\alpha -1}(z),\notag\\
			(z\pder_z+\frac{1}{2}(1+\alpha +\theta))\bF_{\theta,\alpha }(z)&=&\frac{1+\alpha +\theta}{2}\bF_{\theta+2,\alpha }(z),\nonumber\\
			(z\pder_z+\frac{1}{2}(1+\alpha -\theta)-z)\bF_{\theta,\alpha }(z)&=&\frac{1+\alpha -\theta}{2}\bF_{\theta-2,\alpha }(z).\nonumber
		\end{eqnarray}

	The recurrence relations for $\log(z)\bF_{\theta,m} + \bD_{\theta,m}$ are the same as for $\bF_{\theta,m}$. They lead to the recurrence relations for
$ \bD_{\theta,m}$:
		\begin{eqnarray}
			\pder_z\bD_{\theta,m}(z) &=& \frac{1+\theta+m}{2}\bD_{\theta+1,m+1}(z) - \frac{\bF_{\theta,m}(z)}{z},\nonumber\\
			(\pder_z-1)\bD_{\theta,m}(z) &=& \frac{-1+\theta-m}{2}\bD_{\theta-1,m+1}(z)-\frac{\bF_{\theta,m}(z)}{z}, \nonumber\\
			(z\pder_z+m-z)\bD_{\theta,m}(z) &=& \bD_{\theta-1,m-1}(z)-\bF_{\theta,m}(z),\nonumber\\
			(z\pder_z+m)\bD_{\theta,m}(z) &=& \bD_{\theta+1,m-1}(z)-\bF_{\theta,m}(z),\notag\\
			(z\pder_z+\frac{1}{2}(1+m+\theta))\bD_{\theta,m}(z)&=&\frac{1+m+\theta}{2}\bD_{\theta+2,m}(z)-\bF_{\theta,m}(z),\nonumber\\
			(z\pder_z+\frac{1}{2}(1+m-\theta)-z)\bD_{\theta,m}(z)&=&\frac{1+m-\theta}{2}\bD_{\theta-2,m}(z)-\bF_{\theta,m}(z).\nonumber
		\end{eqnarray}
		They imply contiguous relations
		\begin{eqnarray}
			\bD_{\theta,m}(z) &=& \frac{1+m+\theta}{2}\bD_{\theta+1,m+1}(z)+\frac{1+m-\theta}{2}\bD_{\theta-1,m+1}(z),\nonumber\\
			z\bD_{\theta,m}(z) &=& \bD_{\theta+1,m-1}(z)-\bD_{\theta-1,m-1}(z),\notag\\
			(\theta+z)\bD_{\theta,m}(z) &=& \frac{1+m+\theta}{2}\bD_{\theta+2,m}(z) - \frac{1+m-\theta}{2}\bD_{\theta-2,m}(z).\nonumber
		\end{eqnarray}

                \subsection{Quadratic relations}
It is well-known that  the ${}_0F_1$ equation and the ${}_1F_1$ equation for $\theta=0$ are related by a quadratic transformation. On the level of their solutions, we have
\begin{align}
  F_\alpha(z^2)&=\e^{- 2 z} F_{0,2\alpha}( 4z),\label{double1}\\
U_\alpha(z^2)
&=2\cdot4^{\alpha}\e^{-2 z}U_{0,2\alpha}(4z).\label{double2}
\end{align}
This leads to a simple relationship between
$\bD_m$ and $\bD_{0,2m}$:
\begin{theorem}
  \begin{equation}\bD_m(z^2)=\frac{2(-4)^m\sqrt{\pi}}{\Gamma(\frac{1}{2}-m)}\e^{-2z}\big(\log(4)\bF_{0,2m}(4z)+\bD_{0,2m}(4z)\big).
\label{double5}\end{equation}\end{theorem}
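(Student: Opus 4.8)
The plan is to chain together the two connection results already proved, Theorem \ref{th1} and Theorem \ref{th2}, through the quadratic transformation (\ref{double2}) for the $U$-functions, and then to strip off the logarithmic terms using the normalized form of (\ref{double1}). First I would evaluate Theorem \ref{th1} at the point $z^2$, giving
\[\log(z^2)\,\bF_m(z^2)+\bD_m(z^2)=(-1)^{m+1}\sqrt\pi\,U_m(z^2).\]
Into this I substitute (\ref{double2}) in the form $U_m(z^2)=2\cdot 4^m\e^{-2z}U_{0,2m}(4z)$, and then apply Theorem \ref{th2} with $\theta=0$ and $m$ replaced by $2m$, so that $\Gamma(\tfrac{1-2m}{2})=\Gamma(\tfrac12-m)$ and $(-1)^{2m+1}=-1$, to express $U_{0,2m}(4z)$ via $\log(4z)\,\bF_{0,2m}(4z)+\bD_{0,2m}(4z)$. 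Collecting the constants, with $(-1)^{m+1}\cdot(-1)=(-1)^m$ and $(-1)^m4^m=(-4)^m$, this yields the master identity
\[\log(z^2)\,\bF_m(z^2)+\bD_m(z^2)=\frac{2(-4)^m\sqrt\pi}{\Gamma(\tfrac12-m)}\,\e^{-2z}\big(\log(4z)\,\bF_{0,2m}(4z)+\bD_{0,2m}(4z)\big).\]

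It then remains to remove the logarithms. I would write $\log(z^2)=2\log z$ and $\log(4z)=\log 4+\log z$, valid on $\Re z>0$, which suffices since all the functions involved are meromorphic and the identity extends by analytic continuation. The coefficient of $\log z$ on the left is $2\bF_m(z^2)$, while on the right it is $\dfrac{2(-4)^m\sqrt\pi}{\Gamma(\tfrac12-m)}\e^{-2z}\,\bF_{0,2m}(4z)$; these are equal precisely when
\[\bF_m(z^2)=\frac{(-4)^m\sqrt\pi}{\Gamma(\tfrac12-m)}\,\e^{-2z}\bF_{0,2m}(4z),\]
which is exactly the normalized form of (\ref{double1}). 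To establish it I divide (\ref{double1}) by $\Gamma(m+1)$, producing the factor $\Gamma(2m+1)/\Gamma(m+1)$, and simplify using the Legendre duplication formula $\Gamma(2m+1)=\pi^{-1/2}4^m\Gamma(m+\tfrac12)\Gamma(m+1)$ together with the reflection formula $\Gamma(m+\tfrac12)\Gamma(\tfrac12-m)=(-1)^m\pi$. With the $\log z$ terms thus cancelled, the surviving non-logarithmic terms are exactly the claimed identity (\ref{double5}).

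The only genuinely delicate point is the bookkeeping of the logarithm: I must confirm that the coefficient of $\log z$ cancels identically, which is equivalent to the normalized quadratic relation for $\bF$, and that the duplication and reflection formulas combine to give precisely the constant $\tfrac{(-4)^m\sqrt\pi}{\Gamma(1/2-m)}$ appearing in (\ref{double5}). Everything else is substitution of the three previously established relations followed by collecting constants.
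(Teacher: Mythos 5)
Your proposal is correct and follows essentially the same route as the paper: both evaluate Theorem \ref{th1} at $z^2$, apply Theorem \ref{th2} with $\theta=0$, parameter $2m$ and argument $4z$, link the two via the quadratic relation (\ref{double2}) for the $U$-functions, and cancel the $\log z$ terms using the normalized form of (\ref{double1}), which rests on the same Gamma-function identity (\ref{gamma}) you derive from duplication and reflection. The only difference is bookkeeping order—the paper substitutes (\ref{double1}) into the two sides before equating them, while you equate first and cancel at the end—which is immaterial.
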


\proof
Inserting (\ref{double1}) into (\ref{close1}) we obtain
\begin{equation}
  U_m(z^2)=\frac{(-1)^{m+1}}{\sqrt{\pi}}\Big(2\log (z)\frac{\Gamma(1+2m)}{\Gamma(1+m)}\e^{-2z}\bF_{0,2m}(4z)+\bD_m(z^2)\Big).
\label{double3}\end{equation}
Inserting (\ref{double1}) into (\ref{U-theta-m}) we obtain
\begin{align}&2\cdot4^m\e^{-2z}U_{0,2m}(z)\notag\\
=&-\frac{2\cdot4^m}{\Gamma(\frac12-m)}\e^{-2z}
\Big(  \big((\log 4    +\log(z)\big)\bF_{0,2m}(4z)+
  \bD_{0,2m}(4z)\Big).
  \label{double4}\end{align}
Now by (\ref{double2})
we have (\ref{double3})=(\ref{double4}).
Using the identity
\begin{equation}
  \frac{(-1)^m\Gamma(1+2m)}{\sqrt{\pi}\Gamma(1+m)}=\frac{4^m}{\Gamma(\frac12-m)},
  \label{gamma}\end{equation}
we see that the terms with $\log(z)$ cancel and we obtain (\ref{double5}).
\qed

		\section{The ${}_2F_1$ equation}
		\subsection{The ${}_2F_1$ function}
                In the parameters $\alpha,\beta,\mu$ introduced in (\ref{para2}), 
                the  ${}_2F_1$   operator (\ref{21}) becomes
			\begin{align}
			\mathcal{F}_{\alpha,\beta,\mu}:=& z(1-z)\pder_z^2 + \big((1+\alpha)(1-z)-(1+\beta)z\big)\pder_z\notag\\
			&+ \frac{\mu^2}{4}-\frac{1}{4}(\alpha+\beta+1)^2.
			\end{align}
                        It annihillates the ${}_2F_1$ function
$ F_{\alpha,\beta ,\mu }(z)=F\bigl(
\frac{1+\alpha+\beta -\mu}{2},\frac{1+\alpha+\beta +\mu}{2};1+\alpha;z\bigr)$. We will mostly use its normalized version (\ref{f21}):
			\begin{equation}
			\bF_{\alpha,\beta,\mu} (z):=\frac{      			F_{\alpha,\beta,\mu} (z)}{\Gamma(\alpha+1)}=                  
	                \sum_{n=0}^\infty\frac{\big(\frac{1+\alpha+\beta-\mu}{2}\big)_n\big(\frac{1+\alpha+\beta+\mu}{2}\big)_n}{\Gamma(1+\alpha+n)n!}z^n.
			\end{equation}

			There is also another solution with a power-like behavior at zero:
			\begin{equation}
				z^{-\alpha}\bF_{-\alpha,\beta,-\mu}(z).
			\end{equation}

		\subsection{Solution with a simple behaviour at infinity}

The following function is annihilated by $\cF_{\alpha,\beta,\mu}$ and behaves as $\frac{1}{\Gamma(1-\mu)}(-z)^{\frac{-1-\alpha-\beta+\mu}{2}}$ at $\infty$ (see e.g. \cite{D}):
 \begin{equation}
   \bU_{\alpha,\beta,\mu}(z):= 	(-z)^{\frac{-1-\alpha-\beta+\mu}{2}}\bF_{-\mu,\beta,-{\alpha}}(z^{-1}).\end{equation}
		It can be expressed with use of $\bF_{\alpha,\beta,\mu}$ function:
		\begin{align}\notag
			\bU_{\alpha,\beta,\mu}(z)& = -\frac{\pi}{\sin(\pi \alpha)}\bigg(\frac{\bF_{\alpha,\beta,\mu}(z)}{\Gamma(\frac{1-\alpha-\beta-\mu}{2})\Gamma(\frac{1-\alpha+\beta-\mu}{2})}\\&\hspace{5ex}- \frac{(-z)^{-\alpha}\bF_{-\alpha,\beta,-\mu}(z)}{\Gamma(\frac{1+\alpha+\beta-\mu}{2})\Gamma(\frac{1+\alpha-\beta-\mu}{2})}\bigg).
		\end{align}

                We have a set of identities
                \begin{align}
                  &\bU_{\alpha,\beta,\mu}(z)\\
                  =\ & (-z)^{-\alpha}\bU_{-\alpha,\beta,\mu}(z)\\
                                    =\ &(1-z)^{-\beta}\bU_{\alpha,-\beta,\mu}(z)\label{also}\\
                          =\ &(-z)^{-\alpha}(1-z)^{-\beta}\bU_{-\alpha,-\beta,\mu}(z),
                \end{align}
                which are essentially a part of the so-called Kummer table, see e.g. \cite{D}. They follow by the following argument: all of them are annihilated by $\cF_{\alpha,\beta,\mu}$ and behave like
$\frac{1}{\Gamma(1-\mu)}(-z)^{\frac{-1-\alpha-\beta+\mu}{2}}$ at $\infty$. These conditions determine uniquely a solution to the hypergeometric equation.
                
                We have another identity
                \begin{equation}
                \bU_{\alpha,\beta,\mu}(z)=\e^{\mp\i\frac{\pi}{2}(-1-\alpha-\beta+\mu)}
                \bU_{\beta,\alpha,\mu}(1-z),\quad \pm\Im z>0.
                \end{equation}
                Indeed, 
$  \bU_{\beta,\alpha,\mu}(1-z)$ is annihilated by $\cF_{\alpha,\beta,\mu}$ and behaves as $\frac{1}{\Gamma(1-\mu)}z^{\frac{-1-\alpha-\beta+\mu}{2}}$ at $\infty$. Then we use (\ref{multi}).

		\subsection{Degenerate case}
		
		If $\alpha=m\in \bbZ$, then \[\bF_{m,\beta,\mu}(z)=\sum_{n=\max\{0,-m\}}^\infty\frac{\big(\frac{1+m+\beta-\mu}{2}\big)_n\big(\frac{1+m+\beta+\mu}{2}\big)_n}
		{n!(m+n)!} z^n.\]
		Therefore,        
		\begin{align}
z^{-m}\bF_{-m,\beta,-\mu}(z)&=	\Big(\frac{1-m\pm \beta-\mu}{2}\Big)_m\Big(\frac{1-m\pm\beta+\mu}{2}\Big)_m\bF_{m,\beta,\mu}(z)\label{degenerate-Fabm2}
\\&=
(-1)^m\Big(\frac{1-m+\beta\pm\mu}{2}\Big)_m\Big(\frac{1-m-\beta\pm\mu}{2}\Big)_m\bF_{m,\beta,\mu}(z).\label{degenerate-Fabm2a}
                \end{align}
		Hence $\bF_{m,\beta,\mu}$ and $z^{-m}\bF_{-m,\beta,-\mu}$ are no longer linearly independent.

Note that
(\ref{degenerate-Fabm2}) and  (\ref{degenerate-Fabm2a}) contain 4 ways of writing
                 the coefficient in front of $\bF_{m,\beta,\mu}$---this follows from the identity (\ref{pochhammer}).

 		For $m\in\bbN$ we will look for another solution of the form 
		\begin{equation}\label{eq-for-D-hyper}
\log(- z)\cdot\bF_{m,\beta,\mu}(z) + \bD_{m,\beta,\mu}(z).
		\end{equation}
		Again, this does not fix $\bD_{m,\beta,\mu}$---we may add to it any multiple of $\bF_{m,\beta,\mu}$. Inserting (\ref{eq-for-D-hyper}) into the hypergeometric equation yields the recurrence relations
		\begin{eqnarray}
		d_{-1} &=& \frac{1}{(m-1)!}\,\Big(\frac{2}{m-1+\beta-\mu}\Big)\Big(\frac{2}{m-1+\beta+\mu}\Big),\notag\\
		d_{-k} &=& -\frac{(k-1)(m+1-k)}{\Big(\frac{1+m+\beta-\mu}{2}-k\Big)\Big(\frac{1+m+\beta+\mu}{2}-k\Big) }d_{-k+1},\nonumber\\
		d_{k+1} &=& \frac{1}{(k+1)(k+1+m)}\Bigg(\Big(\frac{1+m+\beta-\mu}{2}+k\Big)\Big(\frac{1+m+\beta+\mu}{2}+k\Big) d_{k}\nonumber\\ 
				&& + \frac{\big(\frac{1+m+\beta-\mu}{2}\big)_k\big(\frac{1+m+\beta+\mu}{2}\big)_k}{k!(m+k)!}\bigg((1+\beta+m+2k) \nonumber\\
				&&\hspace{0.75cm}-\frac{\Big(\frac{1+m+\beta-\mu}{2}+k\Big)\Big(\frac{1+m+\beta+\mu}{2}+k\Big)}{(1+m+k)(k+1)}(2k+m+2)\bigg)\Bigg).\nonumber
		\end{eqnarray}
		These recursion relations
                are solved by
		\begin{eqnarray}\label{recur-hyper-2}
		d_{-k} &=& (-1)^{k-1}\frac{(k-1)!(\frac{1+m+\beta+\mu}{2})_{-k}
             (\frac{1+m+\beta-\mu}{2})_{-k}}{(m-k)!},\mbox{ for }k=1,2\dots,m,\\
		d_k &=&               \bigg(H_k\Big(\frac{1+m+\beta+\mu}{2}\Big)+H_k\Big(\frac{1+m+\beta-\mu}{2}\Big)\nonumber \\&&\hspace{0.75cm} -H_k- H_{k}(m)+c\bigg)\frac{(\frac{1+m+\beta+\mu}{2})_k(\frac{1+m+\beta-\mu}{2})_k}{(m+k)!k!},\mbox{ for }k=0,1,2,\dots,\nonumber
		\end{eqnarray}
	where  ${C}$ is arbitrary. 
We introduce a particular solution of these relations:
		\begin{eqnarray}
		\bD_{m,\beta,\mu}(z) &=& \sum_{k=0}^\infty \bigg(\psi\Big(\frac{1+m+\beta-\mu}{2}+k\Big)+\psi\Big(\frac{1-m-\beta-\mu}{2}-k\Big)\nonumber \\&&\hspace{0.75cm} -\psi(k+1)- \psi(m+k+1)\bigg)\frac{(\frac{1+m+\beta+\mu}{2})_k(\frac{1+m+\beta-\mu}{2})_k}{(m+k)!k!}z^k\nonumber\\
		&&+\sum_{k=1}^{m}(-1)^{k-1}\frac{(k-1)!(\frac{1+m+\beta+\mu}{2})_{-k}
             (\frac{1+m+\beta-\mu}{2})_{-k}}{(m-k)!}z^{-k}.
		\end{eqnarray}
                For $m=1,2,\dots$, we set
                \begin{equation}
                  \bD_{-m,\beta,\mu}(z):=z^m\bD_{m,\beta,\mu}(z).\end{equation}

 The function $\bU_{m,\beta,\mu}$ is closely related to $\bD_{m,\beta,\mu}$:
 \begin{theorem} For $m\in\bbZ$,
		\begin{equation}\label{U-def}
	\bU_{m,\beta,\mu}(z) = \frac{(-1)^{m+1}}{\Gamma(\frac{1-m-\beta-\mu}{2})\Gamma(\frac{1-m+\beta-\mu}{2})}\big(\log(- z) \cdot \bF_{m,\beta,\mu}(z) + \bD_{m,\beta,\mu}(z)\big).
		\end{equation}\end{theorem}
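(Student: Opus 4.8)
The plan is to reproduce, for the ${}_2F_1$ equation, the de l'Hospital argument used in the proof of Theorem~\ref{th1}. I treat $\alpha$ as a continuous parameter and view $\bU_{m,\beta,\mu}$ as the limit $\lim_{\alpha\to m}\bU_{\alpha,\beta,\mu}$, feeding in the connection formula of the previous subsection, which writes $\bU_{\alpha,\beta,\mu}(z)$ as $-\pi/\sin(\pi\alpha)$ times the difference $A(\alpha)-B(\alpha)$, where
\[
A(\alpha):=\frac{\bF_{\alpha,\beta,\mu}(z)}{\Gamma(\tfrac{1-\alpha-\beta-\mu}{2})\Gamma(\tfrac{1-\alpha+\beta-\mu}{2})},\qquad
B(\alpha):=\frac{(-z)^{-\alpha}\bF_{-\alpha,\beta,-\mu}(z)}{\Gamma(\tfrac{1+\alpha+\beta-\mu}{2})\Gamma(\tfrac{1+\alpha-\beta-\mu}{2})}.
\]
Since $\bU_{m,\beta,\mu}$ is finite while $\pi/\sin(\pi\alpha)$ has a simple pole at $\alpha=m$, the bracket must vanish there, i.e. $A(m)=B(m)$. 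This is exactly the content of the degeneracy relation \eqref{degenerate-Fabm2}: using $(-z)^{-m}=(-1)^m z^{-m}$ for integer $m$ and rewriting the Pochhammer symbols $(\tfrac{1-m+\beta-\mu}{2})_m(\tfrac{1-m-\beta-\mu}{2})_m$ as the ratio of the four $\Gamma$ factors appearing in $A$ and $B$ (via the reflection identity \eqref{pochhammer}), one checks $A(m)=B(m)$. This is the $0/0$ indeterminacy I exploit.

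Applying de l'Hospital then yields
\[
\bU_{m,\beta,\mu}(z)=\frac{-\pi\big(A'(\alpha)-B'(\alpha)\big)}{\pi\cos(\pi\alpha)}\bigg|_{\alpha=m}
=(-1)^{m+1}\big(A'(m)-B'(m)\big),
\]
so the whole statement reduces to the two $\alpha$-derivatives at $\alpha=m$. The logarithm arises from a single place, $\partial_\alpha(-z)^{-\alpha}=-\log(-z)\,(-z)^{-\alpha}$ inside $B'$; collecting this contribution and invoking $A(m)=B(m)$ once more, its coefficient collapses to $\tfrac{(-1)^{m+1}}{\Gamma(\frac{1-m-\beta-\mu}{2})\Gamma(\frac{1-m+\beta-\mu}{2})}\bF_{m,\beta,\mu}(z)$, which is precisely the $\log(-z)$ term of the claim. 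It then remains to identify the non-logarithmic remainder with $\tfrac{(-1)^{m+1}}{\Gamma(\cdots)\Gamma(\cdots)}\bD_{m,\beta,\mu}(z)$. The relevant pieces are $\partial_\alpha\bF_{\alpha,\beta,\mu}$ and $\partial_\alpha\bF_{-\alpha,\beta,-\mu}$, which (because of the reciprocal-$\Gamma$ normalization $1/\Gamma(1\pm\alpha+n)$ and the two shifted Pochhammer factors) produce ascending digamma values $\psi(\tfrac{1+\alpha+\beta\pm\mu}{2}+n)$ and $-\psi(1\pm\alpha+n)$ against each power of $z$, together with the logarithmic derivatives of the four $\Gamma$ factors in the denominators, contributing the remaining half-argument digamma values. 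I would compute these term by term and match the coefficient of each $z^k$.

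The \emph{main obstacle} is the bookkeeping of these $\psi$-contributions: one must show that after evaluation at $\alpha=m$ they reorganize into exactly the four-digamma combination
\[
\psi\big(\tfrac{1+m+\beta-\mu}{2}+k\big)+\psi\big(\tfrac{1-m-\beta-\mu}{2}-k\big)-\psi(k+1)-\psi(m+k+1)
\]
defining $\bD_{m,\beta,\mu}$. The descending argument $\psi(\tfrac{1-m-\beta-\mu}{2}-k)$ and the finite principal part $\sum_{k=1}^m(\cdots)z^{-k}$ are the delicate points: they come from $\partial_\alpha\bF_{-\alpha,\beta,-\mu}|_{\alpha=m}$, where the terms $n=0,\dots,m-1$ meet the zeros of $1/\Gamma(1-\alpha+n)$ and, after multiplication by $(-z)^{-m}$, turn into the $z^{-k}$ tail---exactly as the analogous finite sum did in the proof of Theorem~\ref{th1}. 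Converting the ascending $+\mu$ digamma values coming from $A'$ into the descending $-\mu$ values of the target requires the reflection formula $\psi(1-x)-\psi(x)=\pi\cot(\pi x)$, whose $\cot$ term is absorbed by the limiting procedure. Finally, the case $m<0$ is handled separately by invoking the Kummer symmetry $\bU_{\alpha,\beta,\mu}(z)=(-z)^{-\alpha}\bU_{-\alpha,\beta,\mu}(z)$ together with the defining extension $\bD_{-m,\beta,\mu}:=z^m\bD_{m,\beta,\mu}$ and the branch relation \eqref{multi}, reducing it to the already-established case $m\ge 0$.
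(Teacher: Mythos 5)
Your argument for $m\ge0$ is the same as the paper's: de l'Hospital applied to the connection formula, with the $0/0$ indeterminacy supplied by \eqref{degenerate-Fabm}, the logarithm coming only from $\partial_\alpha(-z)^{-\alpha}$, and the principal part $\sum_{k=1}^m(\cdots)z^{-k}$ coming from the zeros of $1/\Gamma(1-\alpha+n)$ for $n<m$. Two of the steps you sketch, however, would fail as described. The first is your mechanism for the digamma bookkeeping: the $\pi\cot$ term of the reflection formula is \emph{not} ``absorbed by the limiting procedure''---by that stage de l'Hospital has already been applied and the limit evaluated, and $\pi\cot\big(\pi\big(\tfrac{1+m+\beta+\mu}{2}+k\big)\big)$ is a finite, generically nonzero constant depending on $\beta+\mu$. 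If you insist on reflection, you must apply it twice (once to $\psi\big(\tfrac{1-m-\beta-\mu}{2}\big)$ coming from the $\Gamma$-factors, once to flip the resulting ascending value back to a descending one) and observe that the two cotangents cancel identically because $\cot$ is $\pi$-periodic and the two arguments differ by an integer. The paper bypasses this entirely with the finite identity $H_k(z)=-H_k(1-z-k)$, which gives, e.g., $\psi\big(\tfrac{1-m-\beta-\mu}{2}\big)+H_k\big(\tfrac{1+m+\beta+\mu}{2}\big)=\psi\big(\tfrac{1-m-\beta-\mu}{2}-k\big)$ exactly, with no transcendental input.

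The second, more serious, gap is the case $m<0$. Write $C_m:=\frac{(-1)^{m+1}}{\Gamma(\frac{1-m-\beta-\mu}{2})\Gamma(\frac{1-m+\beta-\mu}{2})}$ for the prefactor in \eqref{U-def}. Under your reduction $\bU_{-m,\beta,\mu}(z)=(-z)^m\bU_{m,\beta,\mu}(z)$, the degeneracy relation \eqref{degenerate-Fabm} does give $(-z)^mC_m\log(-z)\bF_{m,\beta,\mu}(z)=C_{-m}\log(-z)\bF_{-m,\beta,\mu}(z)$, so the logarithmic parts match; but the $\bD$ parts then require $(-1)^mC_m=C_{-m}$, i.e.\ (using \eqref{pochhammer}) $\big(\tfrac{1-m+\beta-\mu}{2}\big)_m\big(\tfrac{1-m+\beta+\mu}{2}\big)_m=1$, which is generically false. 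In other words, the $\log(-z)\bF$ term and the $\bD$ term scale by \emph{different} factors under the Kummer symmetry, because the extension $\bD_{-m,\beta,\mu}:=z^m\bD_{m,\beta,\mu}$ lacks the Pochhammer factor that relates $z^m\bF_{m,\beta,\mu}$ to $\bF_{-m,\beta,\mu}$ in \eqref{degenerate-Fabm2}. So the negative case cannot be ``reduced to the already-established case'' this way. To be fair, the paper shares this defect: its proof tacitly assumes $m\ge0$, and its statement at negative first index is incompatible with its own definition of $\bD_{-m,\beta,\mu}$ by exactly this factor---unlike in the ${}_0F_1$ case of Theorem \ref{th1}, where the analogous degeneracy coefficient equals $1$ and the extension is consistent.
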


 \proof Note that  the minus case of (\ref{degenerate-Fabm2a}) can be rewritten as
		\begin{equation}	\frac{(-z)^{-m}\bF_{-m,\beta,-\mu}(z)}{\Gamma(\frac{1+m+\beta-\mu}{2})\Gamma(\frac{1+m-\beta-\mu}{2})} = \frac{\bF_{m,\beta,\mu}(z)}{\Gamma(\frac{1-m-\beta-\mu}{2})\Gamma(\frac{1-m+\beta-\mu}{2})}.\label{degenerate-Fabm}
		\end{equation}
                Therefore, we can apply
                the de l'Hospital rule. As a preparation for this we compute
                \begin{align}            &    \partial_\alpha\frac{
    \bF_{\alpha,\beta,\mu}(z)}
  {\Gamma(\frac{1-\alpha-\beta-\mu}{2})
    \Gamma(\frac{1-\alpha+\beta-\mu}{2})}\\
    =&\sum_{k=0}^\infty\Bigg(\frac12\psi\Big(\frac{1-\alpha-\beta-\mu}{2}\Big)
    +\frac12\psi\Big(\frac{1-\alpha+\beta-\mu}{2}\Big)\notag\\
    &\quad
    +\frac12H_k\Big(\frac{1+\alpha+\beta-\mu}{2}\Big)
    +\frac12H_k\Big(\frac{1+\alpha+\beta+\mu}{2}\Big)
    -\psi(1+\alpha+k)\Bigg)\notag
    \\
    &\times\frac{\big(\frac{1+\alpha+\beta-\mu}{2}\big)_k
      \big(\frac{1+\alpha+\beta+\mu}{2}\big)_k z^k }
      {\Gamma\big(\frac{1-\alpha-\beta-\mu}{2}\big)
        \Gamma\big(\frac{1-\alpha+\beta-\mu}{2}\big)
        \Gamma(1+\alpha+k)k!}.\notag
                \end{align}
                Thus,
                \begin{align}
   &                \partial_\alpha
    \frac{\bF_{\alpha,\beta,\mu}(z)}
         {\Gamma(\frac{1-\alpha-\beta-\mu}{2})\Gamma(\frac{1-\alpha+\beta-\mu}{2})}\Big|_{\alpha=m}\\
         =&\sum_{k=0}^\infty\Bigg(\frac12\psi\Big(\frac{1-m-\beta-\mu}{2}\Big)
    +\frac12\psi\Big(\frac{1-m+\beta-\mu}{2}\Big)\notag\\
    &\quad
    +\frac12H_k\Big(\frac{1+m+\beta-\mu}{2}\Big)
    +\frac12H_k\Big(\frac{1+m+\beta+\mu}{2}\Big)
    -\psi(1+m+k)\Bigg)\notag
    \\
    &\times\frac{\big(\frac{1+m+\beta-\mu}{2}\big)_k
      \big(\frac{1+m+\beta+\mu}{2}\big)_k z^k }
      {\Gamma\big(\frac{1-m-\beta-\mu}{2}\big)
        \Gamma\big(\frac{1-m+\beta-\mu}{2}\big)
        (m+k)!k!}\notag,\\
         &\partial_\alpha
    \frac{\bF_{\alpha,\beta,\mu}(z)}
         {\Gamma(\frac{1-\alpha-\beta-\mu}{2})\Gamma(\frac{1-\alpha+\beta-\mu}{2})}\Big|_{\alpha=-m}\\=&         
\sum_{k=0}^{m-1}\frac{(-1)^{1-m+k}(-1+m-k)!\big(\frac{1-m+\beta-\mu}{2}\big)_k
      \big(\frac{1-m+\beta+\mu}{2}\big)_k z^k }
      {\Gamma\big(\frac{1+m-\beta-\mu}{2}\big)
        \Gamma\big(\frac{1+m+\beta-\mu}{2}\big)
        k!}\notag
      \\&+         
\sum_{k=m}^\infty\Bigg(\frac12\psi\Big(\frac{1+m-\beta-\mu}{2}\Big)
+\frac12\psi\Big(\frac{1+m+\beta-\mu}{2}\Big)\notag\\
    &\quad
    +\frac12H_k\Big(\frac{1-m+\beta-\mu}{2}\Big)
    +\frac12H_k\Big(\frac{1-m+\beta+\mu}{2}\Big)
    -\psi(1-m+k)\Bigg)\notag
    \\
    &\times\frac{\big(\frac{1-m+\beta-\mu}{2}\big)_k
      \big(\frac{1-m+\beta+\mu}{2}\big)_k z^k }
      {\Gamma\big(\frac{1+m-\beta-\mu}{2}\big)
        \Gamma\big(\frac{1+m+\beta-\mu}{2}\big)
        (-m+k)!k!}\notag
      \\=&         
\sum_{k=0}^{m-1}\frac{(-1)^{1+k}(-1+m-k)!\big(\frac{1+m+\beta-\mu}{2}\big)_{k-m}
      \big(\frac{1+m+\beta+\mu}{2}\big)_{k-m} z^k }
      {\Gamma\big(\frac{1-m-\beta-\mu}{2}\big)
        \Gamma\big(\frac{1-m+\beta-\mu}{2}\big)
        k!}\notag
      \\&+         
(-z)^m\sum_{k=0}^\infty\Bigg(\frac12\psi\Big(\frac{1+m-\beta-\mu}{2}\Big)
 +\frac12\psi\Big(\frac{1+m+\beta-\mu}{2}\Big)\notag\\
    &\quad
    +\frac12H_{k+m}\Big(\frac{1-m+\beta-\mu}{2}\Big)
    +\frac12H_{k+m}\Big(\frac{1-m+\beta+\mu}{2}\Big)
    -\psi(1+k)\Bigg)\notag
    \\
    &\times\frac{\big(\frac{1+m+\beta-\mu}{2}\big)_k
      \big(\frac{1+m+\beta+\mu}{2}\big)_k z^k }
      {\Gamma\big(\frac{1-m+\beta-\mu}{2}\big)
        \Gamma\big(\frac{1-m-\beta-\mu}{2}\big)
        k!(k+m)!},\label{shifted}   
                \end{align}
               where we shifted the variable $k$ by $m$ and used a few identities for the Pochhammer symbol in (\ref{shifted}).
               Now, recalling that $F_{-\alpha,\beta,-\mu}(z)=F_{-\alpha,\beta,\mu}(z)$,
               we can write
\begin{align}\notag
  \bU_{m,\beta,\mu}(z)&=
    -\lim_{\alpha\to m}\frac{\pi}{\sin(\pi \alpha)} \bigg(\frac{\bF_{\alpha,\beta,\mu}(z)}{\Gamma(\frac{1-\alpha-\beta-\mu}{2})\Gamma(\frac{1-\alpha+\beta-\mu}{2})}- \frac{(-z)^{-\alpha}\bF_{-\alpha,\beta,-\mu}(z)}{\Gamma(\frac{1+\alpha+\beta-\mu}{2})\Gamma(\frac{1+\alpha-\beta-\mu}{2})}\bigg)
\\
=&(-1)^{m+1}\Bigg(   \partial_\alpha
    \frac{\bF_{\alpha,\beta,\mu}(z)} {\Gamma(\frac{1-\alpha-\beta-\mu}{2})\Gamma(\frac{1-\alpha+\beta-\mu}{2})}\Big|_{\alpha=m}\notag\\
&+ (-z)^{-m}         \partial_\alpha
    \frac{\bF_{\alpha,\beta,\mu}(z)} {\Gamma(\frac{1-\alpha-\beta-\mu}{2})\Gamma(\frac{1-\alpha+\beta-\mu}{2})}\Big|_{\alpha=-m}\notag\\&
+\log(-z)(-z)^{-m}    \frac{\bF_{-m,\beta,\mu}(z)}
{\Gamma(\frac{1+m-\beta-\mu}{2})\Gamma(\frac{1+m+\beta-\mu}{2})}
\Bigg)\notag\\
    &=
(-1)^{m+1}\sum_{k=0}^{m-1}\frac{(-1+m-k)!\big(\frac{1+m+\beta-\mu}{2}\big)_{k-m}
      \big(\frac{1+m+\beta+\mu}{2}\big)_{k-m} (-z)^{k-m} }
      {\Gamma\big(\frac{1-m-\beta-\mu}{2}\big)
        \Gamma\big(\frac{1-m+\beta-\mu}{2}\big)
        k!}\notag\\
      &+         (-1)^{m+1}\sum_{k=0}^\infty\Bigg(\frac12\psi\Big(\frac{1-m-\beta-\mu}{2}\Big)
      +\frac12\psi\Big(\frac{1-m+\beta-\mu}{2}\Big)\notag\\
     & +\frac12\psi\Big(\frac{1+m-\beta-\mu}{2}\Big)
    +\frac12\psi\Big(\frac{1+m+\beta-\mu}{2}\Big)\notag\\
    &\quad
    +\frac12H_k\Big(\frac{1+m+\beta-\mu}{2}\Big)
    +\frac12H_k\Big(\frac{1+m+\beta+\mu}{2}\Big)
    -\psi(1+m+k)\notag\\
    &\quad
    +\frac12H_{k+m}\Big(\frac{1-m+\beta-\mu}{2}\Big)
    +\frac12H_{k+m}\Big(\frac{1-m+\beta+\mu}{2}\Big)
    -\psi(1+k)\Bigg)\notag
    \\
    &\times\frac{\big(\frac{1+m+\beta-\mu}{2}\big)_k
      \big(\frac{1+m+\beta+\mu}{2}\big)_k z^k }
      {\Gamma\big(\frac{1-m-\beta-\mu}{2}\big)
        \Gamma\big(\frac{1-m+\beta-\mu}{2}\big)
        (m+k)!k!}\notag\\
      &+(-1)^{m+1}\log(-z) \frac{\bF_{\alpha,\beta,\mu}(z)}{\Gamma(\frac{1-\alpha-\beta-\mu}{2})\Gamma(\frac{1-\alpha+\beta-\mu}{2})}.
\notag          \end{align}	
Finally, we simplify the expression by using a few identities:
\begin{align}
 \psi\Big(\frac{1+m+\beta-\mu}{2}\Big)
  +H_k\Big(\frac{1+m+\beta-\mu}{2}\Big)&=  \psi\Big(\frac{1+m+\beta-\mu}{2}+k\Big),\\
 \psi\Big(\frac{1-m+\beta-\mu}{2}\Big)
  +H_{k+m}\Big(\frac{1-m+\beta-\mu}{2}\Big)&
  =\psi\Big(\frac{1+m+\beta-\mu}{2}+k\Big),\\
  \psi\Big(\frac{1-m-\beta-\mu}{2}\Big)  
    +H_k\Big(\frac{1+m+\beta+\mu}{2}\Big) 
    &=\psi\Big(\frac{1-m-\beta-\mu}{2}-k\Big),
  \\
  \psi\Big(\frac{1+m-\beta-\mu}{2}\Big)
  + H_{k+m}\Big(\frac{1-m+\beta+\mu}{2}\Big)&=
  \psi\Big(\frac{1-m-\beta-\mu}{2}-k\Big).
    \end{align}
\qed

\subsection{Recurrence relations}

Recurrence relations for the hypergeometric function have a more symmetric form if we use a special normalisation, namely
			\begin{eqnarray}
			\bF^\rI_{\alpha,\beta,\mu}(z) &:=& \Gamma\Big(\frac{1+\alpha+\beta-\mu}{2}\Big)\Gamma\Big(\frac{1+\alpha-\beta+\mu}{2}\Big)\bF_{\alpha,\beta,\mu}(z)\\
			&=& \Gamma\Big(\frac{1+\alpha-\beta+\mu}{2}\Big)\sum_{k=0}^\infty\frac{\Gamma\big(\frac{1+\alpha+\beta-\mu}{2}+k\big)\big(\frac{1+\alpha+\beta+\mu}{2}\big)_k}{\Gamma(1+\alpha+k)k!}z^k.\nonumber
			\end{eqnarray}

		The function $\bF^\rI_{\alpha,\beta,\mu}$ fulfils the following recurrence relations:
		\begin{eqnarray}
			\pder_z\bF^\rI_{\alpha,\beta,\mu}(z) &=& \frac{1+\alpha+\beta+\mu}{2}\bF^\rI_{\alpha+1,\beta+1,\mu}(z) \nonumber\\
			\big(z(1-z)\pder_z + \alpha(1-z) - \beta z\big)\bF^\rI_{\alpha,\beta,\mu}(z) &=& \frac{-1+\alpha+\beta-\mu}{2}\bF^\rI_{\alpha-1,\beta-1,\mu}(z)\nonumber\\
			\big((1-z)\pder_z - \beta\big)\bF^\rI_{\alpha,\beta,\mu}(z) &=& \frac{1+\alpha-\beta-\mu}{2}\bF^\rI_{\alpha+1,\beta-1,\mu}(z)\nonumber\\
			(z\pder_z+\alpha)\bF^{\rI}_{\alpha,\beta,\mu}(z) &=&\frac{-1+\alpha-\beta+\mu}{2}\bF^\rI_{\alpha-1,\beta+1,\mu}(z)\nonumber\\
			\big(z\pder_z +\frac{1}{2}(1+\alpha+\beta+\mu)\big)\bF^\rI_{\alpha,\beta,\mu}(z) &=& \frac{1+\alpha+\beta+\mu}{2}\bF^\rI_{\alpha,\beta+1,\mu+1}(z)\nonumber\\
			\big(z\pder_z +\frac{1}{2}(1+\alpha+\beta-\mu)\big)\bF^\rI_{\alpha,\beta,\mu}(z) &=& \frac{-1+\alpha-\beta+\mu}{2}\bF^\rI_{\alpha,\beta+1,\mu-1}(z)\label{recur-Fabm}\nonumber\\
			\big(z(1-z)\pder_z - \beta + \frac{1}{2}(1+\alpha+\beta+\mu)(1-z)\big)\bF^\rI_{\alpha,\beta,\mu}(z) &=& \frac{-1+\alpha+\beta-\mu}{2}\bF^\rI_{\alpha,\beta-1,\mu+1}(z)\nonumber\\
			\big(z(1-z)\pder_z - \beta + \frac{1}{2}(1+\alpha+\beta-\mu)(1-z)\big)\bF^\rI_{\alpha,\beta,\mu}(z) &=& \frac{1+\alpha-\beta-\mu}{2}\bF^\rI_{\alpha,\beta-1,\mu-1}(z)\nonumber\\
			\big((z-1)\pder_z + \frac{1}{2}(1+\alpha+\beta+\mu)\big)\bF^\rI_{\alpha,\beta,\mu}(z) &=& \frac{1+\alpha+\beta+\mu}{2}\bF^\rI_{\alpha+1,\beta,\mu+1}(z)\nonumber\\
			\big((z-1)\pder_z + \frac{1}{2}(1+\alpha+\beta-\mu)\big)\bF^\rI_{\alpha,\beta,\mu}(z) &=& \frac{1+\alpha-\beta-\mu}{2}\bF^\rI_{\alpha+1,\beta,\mu-1}(z)\nonumber\\
			\big(z(1-z)\pder_z + \alpha - \frac{1}{2}(1+\alpha+\beta+\mu)z\big)\bF^\rI_{\alpha,\beta,\mu}(z) &=& \frac{-1+\alpha+\beta-\mu}{2}\bF^\rI_{\alpha-1,\beta,\mu+1}(z)\nonumber\\
			\big(z(1-z)\pder_z + \alpha - \frac{1}{2}(1+\alpha+\beta-\mu)z\big)\bF^\rI_{\alpha,\beta,\mu}(z) &=& \frac{-1+\alpha-\beta+\mu}{2}\bF^\rI_{\alpha-1,\beta,\mu-1}(z).\nonumber			
		\end{eqnarray}

		In order to have  recurrence relations for
 $\bD_{m,\beta,\mu}$ similar to relations for $\bF^{\rI}_{m,\beta,\mu}$, we change its normalisation:
		\begin{equation}
			\bD^{\rI}_{m,\beta,\mu}(z) := \Gamma\Big(\frac{1+m+\beta-\mu}{2}\Big)\Gamma\Big(\frac{1+m-\beta+\mu}{2}\Big)\bD_{m,\beta,\mu}(z).
		\end{equation}	
		The recurrence relations for $\log(-z)\bF^{\rI}_{m,\beta,\mu} + \bD^{\rI}_{m,\beta,\mu}$ are the same as for $\bF^{\rI}_{m,\beta,\mu}$ and they lead to the recurrence relations for
		$ \bD^{\rI}_{m,\beta,\mu}$:
		
		\begin{eqnarray}
		\pder_z\bD^\rI_{m,\beta,\mu}(z) &=&\nonumber\\ = \frac{1+m+\beta+\mu}{2}\bD^\rI_{m+1,\beta+1,\mu}(z) &-& \frac{\bF^{\rI}_{m,\beta,\mu}(z)}{z},\nonumber\\
		\big(z(1-z)\pder_z +  m (1-z) - \beta z\big)\bD^\rI_{ m ,\beta,\mu}(z) &=&\nonumber\\ = \frac{-1+ m +\beta-\mu}{2}\bD^\rI_{ m -1,\beta-1,\mu}(z)&-&(1-z)\bF^{\rI}_{m,\beta,\mu}(z),\nonumber\\
		\big((1-z)\pder_z - \beta\big)\bD^\rI_{ m ,\beta,\mu}(z) &=&\nonumber\\ =\frac{1+ m -\beta-\mu}{2}\bD^\rI_{ m +1,\beta-1,\mu}(z)&-&\frac{(1-z)}{z}\bF^{\rI}_{m,\beta,\mu}(z),\nonumber\\
		(z\pder_z+ m )\bD^{\rI}_{ m ,\beta,\mu}(z) &=&\nonumber\\ =\frac{-1+ m -\beta+\mu}{2}\bD^\rI_{ m -1,\beta+1,\mu}(z)&-&\bF^{\rI}_{m,\beta,\mu}(z),\nonumber\\
		\big(z\pder_z +\frac{1}{2}(1+ m +\beta+\mu)\big)\bD^\rI_{ m ,\beta,\mu}(z) &=&\nonumber\\ = \frac{1+ m +\beta+\mu}{2}\bD^\rI_{ m ,\beta+1,\mu+1}(z)&-&\bF^{\rI}_{m,\beta,\mu}(z),\nonumber\\
		\big(z\pder_z +\frac{1}{2}(1+ m +\beta-\mu)\big)\bD^\rI_{ m ,\beta,\mu}(z) &=&\nonumber\\ = \frac{-1+ m -\beta+\mu}{2}\bD^\rI_{ m ,\beta+1,\mu-1}(z)&-&\bF^{\rI}_{m,\beta,\mu}(z)\notag,\\
		\big(z(1-z)\pder_z - \beta + \frac{1}{2}(1+ m +\beta+\mu)(1-z)\big)\bD^\rI_{ m ,\beta,\mu}(z) &=&\nonumber\\ = \frac{-1+ m +\beta-\mu}{2}\bD^\rI_{ m ,\beta-1,\mu+1}(z)&-&(1-z)\bF^{\rI}_{m,\beta,\mu}(z),\nonumber\\
		\big(z(1-z)\pder_z - \beta + \frac{1}{2}(1+ m +\beta-\mu)(1-z)\big)\bD^\rI_{ m ,\beta,\mu}(z) &=&\nonumber\\ = \frac{1+ m -\beta-\mu}{2}\bD^\rI_{ m ,\beta-1,\mu-1}(z)&-
		&(1-z)\bF^{\rI}_{m,\beta,\mu}(z)\nonumber,\\
		\big((z-1)\pder_z + \frac{1}{2}(1+ m +\beta+\mu)\big)\bD^\rI_{ m ,\beta,\mu}(z) &=&\nonumber\\ = \frac{1+ m +\beta+\mu}{2}\bD^\rI_{ m +1,\beta,\mu+1}(z)&-&\frac{z-1}{z}\bF^{\rI}_{m,\beta,\mu}(z),\nonumber\\
		\big((z-1)\pder_z + \frac{1}{2}(1+ m +\beta-\mu)\big)\bD^\rI_{ m ,\beta,\mu}(z) &=&\nonumber\\ = \frac{1+ m -\beta-\mu}{2}\bD^\rI_{ m +1,\beta,\mu-1}(z)\nonumber&-&\frac{z-1}{z}\bF^{\rI}_{m,\beta,\mu}(z),\\
		\big(z(1-z)\pder_z +  m  - \frac{1}{2}(1+ m +\beta+\mu)z\big)\bD^\rI_{ m ,\beta,\mu}(z) &=&\nonumber\\ = \frac{-1+ m +\beta-\mu}{2}\bD^\rI_{ m -1,\beta,\mu+1}(z)&-&(1-z)\bF^{\rI}_{m,\beta,\mu}(z),\nonumber\\
		\big(z(1-z)\pder_z +  m  - \frac{1}{2}(1+ m +\beta-\mu)z\big)\bD^\rI_{ m ,\beta,\mu}(z) &=&\nonumber\\ = \frac{-1+ m -\beta+\mu}{2}\bD^\rI_{ m -1,\beta,\mu-1}(z)&-&(1-z)\bF^{\rI}_{m,\beta,\mu}(z).\nonumber			
		\end{eqnarray}
		
		They imply contiguous relations:
		\begin{eqnarray}
		m\bD^\rI_{m,\beta,\mu}(z) &=& \frac{-1+m-\beta+\mu}{2}\bD^\rI_{m-1,\beta+1,\mu}(z) - \frac{1+m+\beta+\mu}{2}z\bD^\rI_{m+1,\beta+1,\mu}(z),\nonumber\\
		m(1-z)\bD^\rI_{m,\beta,\mu}(z) &=& \frac{-1+m+\beta-\mu}{2}\bD^\rI_{m-1,\beta-1,\mu}(z) - \frac{1+m-\beta-\mu}{2}z\bD^\rI_{m+1,\beta-1,\mu}(z),\nonumber \\
		\mu\bD^\rI_{m,\beta,\mu}(z) &=& \frac{1+m+\beta+\mu}{2}\bD^\rI_{m,\beta+1,\mu+1}(z) - \frac{-1+m-\beta+\mu}{2}\bD^\rI_{m,\beta+1,\mu-1}(z),\nonumber\\
		\mu(1-z)\bD^\rI_{m,\beta,\mu}(z) &=& \frac{-1+m+\beta-\mu}{2}\bD^\rI_{m,\beta-1,\mu+1}(z) - \frac{1+m-\beta-\mu}{2}\bD^\rI_{m,\beta-1,\mu-1}(z),\nonumber\\
		\mu\bD^\rI_{m,\beta,\mu}(z) &=& \frac{1+m+\beta+\mu}{2}\bD^\rI_{m+1,\beta,\mu+1}(z) - \frac{1+m-\beta-\mu}{2}\bD^\rI_{m+1,\mu,\beta-1}(z),\nonumber\\
		\mu z\bD^\rI_{m,\beta,\mu}(z) &=& \frac{-1+m-\beta+\mu}{2}\bD^\rI_{m-1,\beta,\mu-1}(z) - \frac{-1+m+\beta-\mu}{2}\bD^\rI_{m-1,\beta,\mu+1}(z).\nonumber
		\end{eqnarray}
		
		\subsection{Kummer's table relations}

As a special case of relations from   the so-called Kummer's table, the hypergeometric function satisfies
                \begin{align}
                  \bF_{\alpha,\beta,\mu}(z)&=(1-z)^{-\beta}                  \bF_{\alpha,-\beta,\mu}(z),\label{pow1}\\
                  \bU_{\alpha,\beta,\mu}(z)&=(1-z)^{-\beta}                  \bU_{\alpha,-\beta,\mu}(z),\label{pow2}
                \end{align}
                (see e.g. \cite{D}, and also
                (\ref{also})). There is also
                an analogous identity for $\bD_m$:
              \begin{theorem}   For $m\in\bbZ$, \begin{align}
                  \bD_{m,\beta,\mu}(z)&=(1-z)^{-\beta}                  \bD_{m,-\beta,\mu}(z).
                  \end{align}\end{theorem}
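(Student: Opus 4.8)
The plan is to avoid the series definition entirely and instead derive the claimed identity from the three facts already in hand: the connection formula \eqref{U-def}, the Kummer relation \eqref{pow2} for $\bU$, and the Kummer relation \eqref{pow1} for $\bF$. First I would write \eqref{U-def} for both parameter choices $(m,\beta,\mu)$ and $(m,-\beta,\mu)$. The decisive observation is that the normalizing prefactor
\[
G := \frac{(-1)^{m+1}}{\Gamma\bigl(\frac{1-m-\beta-\mu}{2}\bigr)\Gamma\bigl(\frac{1-m+\beta-\mu}{2}\bigr)}
\]
is invariant under $\beta\mapsto-\beta$, since that substitution merely interchanges the two Gamma factors. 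Hence \eqref{U-def} reads $\bU_{m,\beta,\mu}(z)=G\bigl(\log(-z)\,\bF_{m,\beta,\mu}(z)+\bD_{m,\beta,\mu}(z)\bigr)$ and $\bU_{m,-\beta,\mu}(z)=G\bigl(\log(-z)\,\bF_{m,-\beta,\mu}(z)+\bD_{m,-\beta,\mu}(z)\bigr)$ with \emph{the same} constant $G$.

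Next I would invoke \eqref{pow2}, $\bU_{m,\beta,\mu}(z)=(1-z)^{-\beta}\bU_{m,-\beta,\mu}(z)$, which is valid for all $\alpha$ and in particular for the integer value $\alpha=m$ by the same uniqueness argument (annihilation by $\cF_{\alpha,\beta,\mu}$ plus prescribed behavior at $\infty$) that justified \eqref{also}. Substituting the two instances of \eqref{U-def} into this relation and cancelling the common factor $G$ gives
\[
\log(-z)\,\bF_{m,\beta,\mu}(z)+\bD_{m,\beta,\mu}(z)=(1-z)^{-\beta}\bigl(\log(-z)\,\bF_{m,-\beta,\mu}(z)+\bD_{m,-\beta,\mu}(z)\bigr).
\]
Finally I would apply \eqref{pow1}, $\bF_{m,\beta,\mu}(z)=(1-z)^{-\beta}\bF_{m,-\beta,\mu}(z)$, which makes the two $\log(-z)$ terms equal so they cancel, leaving precisely $\bD_{m,\beta,\mu}(z)=(1-z)^{-\beta}\bD_{m,-\beta,\mu}(z)$.

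The hard part, such as it is, is the legitimacy of cancelling $G$: since $1/\Gamma$ is entire but vanishes at nonpositive integer arguments, $G$ may vanish for exceptional $\beta,\mu$. I would dispose of this by viewing both sides of the desired identity as meromorphic functions of $\beta$ for fixed integer $m$ and fixed $z$; the argument above proves equality for all $\beta$ off the zero set of $G$, and the identity then holds identically by analytic continuation. The negative-integer case needs no extra work, since \eqref{U-def} is stated for all $m\in\bbZ$; alternatively one reduces it to the positive case via $\bD_{-m,\beta,\mu}(z)=z^m\bD_{m,\beta,\mu}(z)$, the factor $z^m$ commuting harmlessly with $(1-z)^{-\beta}$.
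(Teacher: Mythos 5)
Your proof is correct and follows essentially the same route as the paper, whose entire proof reads ``We use (\ref{U-def}) together with (\ref{pow1}), (\ref{pow2})''---i.e., precisely your combination of the connection formula with the two Kummer relations, exploiting the $\beta\mapsto-\beta$ invariance of the Gamma-factor prefactor. The only difference is that you also treat the exceptional parameters where that prefactor vanishes, via analytic continuation in $\beta$; the paper leaves this point implicit.
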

              \begin{proof} We  use   (\ref{U-def})
                together with (\ref{pow1}), (\ref{pow2}). \end{proof}

\subsection{Quadratic relations}
              
                There exist also well-known ``doubling relations'' between hypergeometric functions with special parameters, which involve a quadratic transformation of the independent variable, such as
                \begin{align}
                  \Gamma(1+2\alpha)\bF_{2\alpha,\beta,-\beta}(z)
&=
                  \Big(\frac{2}{2-z}\Big)^{\frac12+\alpha+\beta}
                  \Gamma(1+\alpha)\bF_{\alpha,\beta,-\frac12}\Big(\frac{z^2}{(2-z)^2}\Big)
\label{sasa}\\
                  &=
(1-z)^{-\frac14-\frac{\alpha}{2}-\frac{\beta}{2}}                                 \Gamma(1+\alpha)\bF_{\alpha,-\frac12,-\beta}\Big(\frac{z^2}{4(z-1)}\Big),
\label{sasa2}\\
\bF_{\beta,\beta,2\alpha}(z)&=\bF_{\beta,-\frac12,\alpha}\big(4z(1-z)\big)
\label{sasa1}\\
  &=
  (1-2z)^{-\frac12-\beta-\alpha}\bF_{\beta,\alpha,-\frac12}
  \Big(\frac{4z(z-1)}{(1-2z)^2}\Big).\label{sasa5}
                \end{align}

     Indeed, we check that the functions that appear on the left and right hand sides of (\ref{sasa}) and (\ref{sasa2}) are annihilated by the hypergeometric operator $\cF_{2\alpha,\beta,-\beta}$, are analytic at $0$, and equal $1$ at $0$. Using the fact that $0$ is a regular singular point of the hypergeometric equation, we conclude that they coincide, which proves identities 
     (\ref{sasa}) and (\ref{sasa2}).
  (\ref{sasa1}) and (\ref{sasa5}) can be proven in a  similar way.

                (\ref{sasa1}) can be rewritten as
                \begin{align}
                  \bU_{2\alpha,\beta,-\beta}(z)&= \big(4(1-z)\big)^{-\frac14-\frac{\alpha}{2}-\frac{\beta}{2}}                                 \bU_{\alpha,-\frac12,-\beta}\Big(\frac{z^2}{4(z-1)}\Big).
                 \label{uu0} \end{align}
                Here is a doubling relation for the functions $\bD$:

                \begin{theorem}
                  For $m=0,1,\dots$, we have
            \begin{align}
         \bD_{2m,\beta,-\beta}(z)
                  &= \frac{m!}{2(2m)!}(1-z)^{-\frac14-\frac{m}{2}-\frac{\beta}{2}}                                 \Bigg(\bD_{m,-\frac12,-\beta}\Big(\frac{z^2}{4(z-1)}\Big)
               \notag  \\&\hspace{10ex} -\log\big(4(1-z)\big)
                  \bF_{m,-\frac12,-\beta}\Big(\frac{z^2}{4(z-1)}\Big)
               \Bigg).\label{sasa3}
\end{align}       

                \end{theorem}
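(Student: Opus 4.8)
The plan is to follow the same route as the proof of the $_1F_1$ quadratic relation (\ref{double5}): I would convert the known doubling relation (\ref{uu0}) for the functions $\bU$ into a relation for the functions $\bD$ by expressing every $\bU$ through the connection formula (\ref{U-def}) and then matching the logarithmic parts. Setting $\alpha=m$ in (\ref{uu0}) and abbreviating $w:=\frac{z^2}{4(z-1)}$, I would start from
\[
\bU_{2m,\beta,-\beta}(z)=\big(4(1-z)\big)^{-\frac14-\frac m2-\frac\beta2}\bU_{m,-\frac12,-\beta}(w),
\]
and rewrite both sides using (\ref{U-def}). On the left this produces the prefactor $-1/\big(\Gamma(\frac{1-2m}{2})\Gamma(\frac{1-2m+2\beta}{2})\big)$ multiplying $\log(-z)\bF_{2m,\beta,-\beta}(z)+\bD_{2m,\beta,-\beta}(z)$; on the right it produces $(-1)^{m+1}/\big(\Gamma(\frac{3-2m+2\beta}{4})\Gamma(\frac{1-2m+2\beta}{4})\big)$ multiplying $\log(-w)\bF_{m,-\frac12,-\beta}(w)+\bD_{m,-\frac12,-\beta}(w)$.

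The crucial manipulation is the splitting of the logarithm. Near the negative real axis one has $-w=(-z)^2/\big(4(1-z)\big)$, so that
\[
\log(-w)=2\log(-z)-\log\big(4(1-z)\big),
\]
and this identity of principal branches extends to the relevant punctured neighbourhood of $0$ by analytic continuation. It is precisely this splitting that produces the surviving $-\log\big(4(1-z)\big)\bF_{m,-\frac12,-\beta}(w)$ term in the statement. For the remaining $2\log(-z)\bF_{m,-\frac12,-\beta}(w)$ piece I would invoke the $_2F_1$ doubling relation (\ref{sasa2}) with $\alpha=m$, namely $\Gamma(1+2m)\bF_{2m,\beta,-\beta}(z)=(1-z)^{-\frac14-\frac m2-\frac\beta2}\Gamma(1+m)\bF_{m,-\frac12,-\beta}(w)$; after multiplication by the prefactor $\big(4(1-z)\big)^{-\frac14-\frac m2-\frac\beta2}$ the powers of $(1-z)$ cancel, leaving $4^{-\frac14-\frac m2-\frac\beta2}\frac{\Gamma(1+2m)}{\Gamma(1+m)}\bF_{2m,\beta,-\beta}(z)$.

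With these substitutions the coefficients of $\log(-z)\bF_{2m,\beta,-\beta}(z)$ on the two sides must agree; equivalently one must check the Gamma identity
\[
\frac{-1}{\Gamma(\frac{1-2m}{2})\Gamma(\frac{1-2m+2\beta}{2})}=\frac{2\,(-1)^{m+1}\,4^{-\frac14-\frac m2-\frac\beta2}\,\Gamma(1+2m)}{\Gamma(\frac{3-2m+2\beta}{4})\Gamma(\frac{1-2m+2\beta}{4})\Gamma(1+m)},
\]
which is a consequence of the Legendre duplication formula for $\Gamma$. Once this is verified, the $\log(-z)$ terms cancel identically, and the leftover identity expresses $\bD_{2m,\beta,-\beta}(z)$ as a Gamma-ratio times $(1-z)^{-\frac14-\frac m2-\frac\beta2}\big(\bD_{m,-\frac12,-\beta}(w)-\log\big(4(1-z)\big)\bF_{m,-\frac12,-\beta}(w)\big)$; the very same duplication identity collapses that Gamma-ratio to $\frac{m!}{2(2m)!}$, which yields (\ref{sasa3}). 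I expect the main obstacle to be purely bookkeeping: keeping the branch of the logarithm and of the fractional powers consistent across the substitution, so that $\log(-w)$ splits as claimed and no spurious factors $\e^{\i\pi(\cdots)}$ of the type (\ref{multi}) are introduced, together with the careful check that the two Gamma prefactors combine, through duplication, into the asserted rational constant.
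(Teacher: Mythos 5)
Your proposal is correct and follows essentially the same route as the paper's own proof: start from the $\bU$-doubling relation (\ref{uu0}), expand both sides via (\ref{U-def}), split $\log(-w)=2\log(-z)-\log\big(4(1-z)\big)$, cancel the $\log(-z)$ terms using (\ref{sasa2}), and collapse the Gamma prefactors by the Legendre duplication formula (the paper organizes this via the identities (\ref{gamma}) and $\Gamma(\frac14-\frac m2+\frac\beta2)\Gamma(\frac34-\frac m2+\frac\beta2)=2^{\frac12+m-\beta}\sqrt{\pi}\,\Gamma(\frac12-m+\beta)$, which are exactly the ingredients of your single combined Gamma identity). No gaps; your verification of the constant $\frac{m!}{2(2m)!}$ matches the paper's computation.
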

		
		\begin{proof}
                  By (\ref{U-def}), we have
                  \begin{align}
&                    \bU_{2m,\beta,-\beta}(z)\label{uu1}\\
                    =&\frac{-1}{\Gamma(\frac12-m)\Gamma(\frac12-m+\beta)}
                    \Big(\bD_{2m,\beta,-\beta}(z)+
                    \log(-z)\bF_{2m,\beta,-\beta}(z)\Bigg)\\ =&\frac{(-1)^{1+m}\Gamma(1+2m)}{\sqrt{\pi}4^m\Gamma(1+m)\Gamma(\frac12-m+\beta)}
                    \Big(\bD_{2m,\beta,-\beta}(z)+
                    \log(-z)\bF_{2m,\beta,-\beta}(z)\Bigg),\label{log1}
                    \end{align}
                    where we used (\ref{gamma}).
                    Again, by (\ref{U-def}), we have
                    \begin{align}
&\big(4(1-z)\big)^{-\frac14-\frac{m}{2}-\frac{\beta}{2}}                                 \bU_{m,-\frac12,-\beta}\Big(\frac{z^2}{4(z-1)}\Big)
                    \label{uu2}  \\  =&
                      \frac{(-1)^{m+1}\big(4(1-z)\big)^{-\frac14-\frac{m}{2}-\frac{\beta}{2}}                                 }{\Gamma(\frac{1}{4}-\frac{m}{2}+\frac{\beta}{2})
                        \Gamma(\frac{3}{4}-\frac{m}{2}+\frac{\beta}{2})
                       } 
\Bigg(\bD_{m,-\frac12,-\beta}\Big(\frac{z^2}{4(z-1)}\Big)\\&\hspace{5ex}
               +\log\Big(\frac{z^2}{4(1-z)}\Big)
                  \bF_{m,-\frac12,-\beta}\Big(\frac{z^2}{4(z-1)}\Big)
                  \Bigg)\\
                   \\  =& \frac{(-1)^{m+1}(1-z)^{-\frac14-\frac{m}{2}-\frac{\beta}{2}}                                 }{2^{1+2m}\sqrt{\pi}\Gamma(\frac{1}{2}-m+\beta)
                                               } 
\Bigg(\bD_{m,-\frac12,-\beta}\Big(\frac{z^2}{4(z-1)}\Big)\\&
+\Big(2\log(-z) -
\log\big(4(1-z)\big)\Big)
                  \bF_{m,-\frac12,-\beta}\Big(\frac{z^2}{4(z-1)}\Big)
               \Bigg),\label{log2}
\end{align}       
                    where we used
                    \begin{equation}
                      \Gamma\Big(\frac{1}{4}-\frac{m}{2}+\frac{\beta}{2}\Big)
                      \Gamma\Big(\frac{3}{4}-\frac{m}{2}+\frac{\beta}{2}\Big)
                      =2^{\frac12+m-\beta}\sqrt{\pi}\Gamma\Big(\frac12-m+\beta\Big). \end{equation}
                    Now by (\ref{uu0}) we have the identity
                    (\ref{uu1})=(\ref{uu2}).
                    Then we notice that by (\ref{sasa2}) the terms in (\ref{log1}) and (\ref{log2}) involving $\log(-z)$ cancel. We obtain                    (\ref{sasa3}).
				\end{proof}

                \appendix
                \section{Some formulas}
In our paper we use various functions related to the Gamma function $\Gamma(z)$:
                           \begin{align}
\text{                the digamma function}&&   \psi(z)&:=\frac{\partial_z\Gamma(z)}{\Gamma(z)},\label{digamma}
                  \\
    \text{the shifted $k$th harmonic number}&&             H_k(z)& := \frac{1}{z} + \dots + \frac{1}{z+k-1},\label{haka}\\
    \text{the $k$th harmonic number}&&                          H_k&:=
                  \frac{1}{1} + \dots + \frac{1}{k}=H_k(1),\label{haka0}\\
                  \text{the Pochhammer symbol}&&              (z)_k&:=\frac{\Gamma(z+k)}{\Gamma(z)}\label{pochhammer1}\\
                 && &=\begin{cases}(z)(z+1)\cdots(z+k-1),& k\geq0,\\
                  \frac{1}{(z+k)(z+k+1)\cdots(z-1)},&k\leq0.\end{cases}
      \notag
\end{align}
Some of their properties are collected below:
                           \begin{align}
                             H_{k+n}(z)&=H_n(z)+H_k(z+n),\\
                             H_k(z)&=-H_k(1-z-k),\\
                  \psi(z+k)&=\psi(z)+H_k(z),\\
                  \psi(1+k)&=-\gamma+H_k,\label{psiha}\\
(z)_k&
                  =(-1)^k(1-k-z)_k.\label{pochhammer}
                           \end{align}

                \begin{align}
                  \partial_z\frac{1}{\Gamma(z)}&=-\frac{\psi(z)}{\Gamma(z)},
                  \\
                    \partial_z\frac{1}{\Gamma(z)}\Big|_{z=-n}&=(-1)^nn!,\quad n=0,1,2,...\\
                      \partial_z(z)_n&=H_n(z)(z)_n.
                      \end{align}



\begin{thebibliography}{aaa}


\bibitem{Ar} W.J.Archibald, \emph{The complete solution of the differential equation for the confluent hypergeometric function.} Phil. Mag. VII, 26 (1938) 415-419

\bibitem{AW} J.R.Airey, H.A.Webb, \emph{The practical importance of the confluent hypergeometric function.} Phil. Mag. 36 (1918) 129-141
                  
			\bibitem{B}
			H. Buchholz, \emph{The confluent hypergeometric function}, Springer Tracts in Natural Philosophy, 1969.
			
			\bibitem{D}
			J. Derezi\'nski,
			\emph{Hypergeometric Type Functions and Their Symmetries},
			Ann. Henri Poincar\'e {\bf 15} (2014), 1569--1653.
			
			\bibitem{DM}
			J. Derezi\'nski and P. Majewski,
			\emph{From conformal group to symmetries of hypergeometric type equations},
                        SIGMA 12 (2016), 108, 69 pages.

\bibitem{H} S. Hollands, \emph{Renormalized Quantum Yang-Mills Fields in Curved Spacetime},
Rev. Math. Phys. 20 (2008) 1033-1172

                        
			\bibitem{L1}
			L.J. Slater, \emph{Confluent hypergeometric functions},
			Cambridge University Press, 1960.
			
			\bibitem{L2}
			L.J. Slater, \emph{Generalized hypergeometric functions},
			Cambridge University Press, 1966.
			
			\bibitem{MO}
			  W. Magnus and F. Oberhettinger, \emph{Formulas and Theorems for the Functions of Mathematical Physics}, English translation, Chelsea Publishing Company, 1954.







                       
			\bibitem{NIST}			National Institute of Science and Technology, \emph{Digital Library of Mathematical Functions}, \url{http://dlmf.nist.gov}, (Access: 7$^{\mathrm{th}}$ April 2017).



                          
	

                        \bibitem{W}
			G. N. Watson, \emph{A treatise on the theory of Bessel functions}, Cambridge University Press, 1922.



                          
			\bibitem{W-Wh}
			G. N. Watson and E. T. Whittaker, \emph{A course in modern analysis. An introduction to the general theory of infinite processes and of analytic functions; with an account of the principal transcendental functions}, Cambridge University Press, 1927.

		\end{thebibliography}
\end{document}